\pgfplotsset{/pgf/number format/use comma,compat=newest}
\newcommand\iid{\mathrel{\stackrel{\makebox[0pt]{\mbox{\normalfont\tiny iid}}}{\sim}}}
\theoremstyle{plain} 
\newtheorem{theorem}{Theorem}
\newtheorem{lemma}[theorem]{Lemma} 
\newtheorem{proposition}[theorem]{Proposition} 
\theoremstyle{definition} 
\newtheorem{definition}{Definition}
\theoremstyle{remark} 
\newtheorem{remark}{Remark}
\title{The multi-species mean-field spin-glass on the Nishimori line}
\author[1]{Diego Alberici}
\author[2]{Francesco Camilli}
\author[2]{Pierluigi Contucci}
\author[2]{Emanuele Mingione}
\affil[1]{Communication Theory Laboratory, EPFL, Switzerland}
\affil[2]{Dipartimento di Matematica, Università di Bologna, Italy}
\begin{document}





\maketitle

\begin{abstract}
In this paper we study a multi-species disordered model on the Nishimori line. The typical properties of this line, a set of identities and inequalities among correlation functions, allow us to prove the replica symmetry i.e. the concentration of the order parameter. When the interaction structure is elliptic we rigorously compute the exact solution of the model in terms of a finite-dimensional variational principle and we study its properties.
\end{abstract}

\textbf{keywords}: Multi-species spin glass, Nishimori line, replica symmetry\\

\section{Introduction}
In this paper we investigate the properties of the elliptic multi-species Sherrington-Kirkpatrick model along the Nishimori line i.e. the sub-manifold of the phase space in which mean and variance of the random parameters, interactions and magnetic fields, coincide.
The multi-species version of a mean field model is simply obtained by relaxing the full invariance under the symmetric group into the weaker one of the product of the symmetric groups on a given partition of the system. The ratios of the partition with respect to the whole, the form factors, are kept fixed in the thermodynamic limit. The ellipticity condition provides the positivity and monotonicity properties that allow to study the system with interpolation methods \cite{Guerra_upper_bound,interp_guerra_2002,panchenko2015sherrington} and obtain a Parisi like solution for Gaussian centered interactions and deterministic magnetic fields \cite{MSK_original,panchenko_multi-SK} (see also \cite{Chen_ferromagnetic} for a case with a ferromagnetic mean of the interactions).

The choice to study the model on the Nishimori line  \cite{nishimori01} reflects the importance of this sub-manifold of the phase space due to its ubiquitous appearence in inference problems and, especially, on the statistical physics approach to machine learning \cite{Lenka,wigner-wishart}. 

The main results of the paper, Theorem \ref{main_theorem} and Lemma \ref{concentration_lemma} in Section \ref{section_4}, are the proof of the variational expression for the pressure per particle in the thermodynamic limit and the self-averaging of the magnetization per particle. The techniques we use to prove them are obtained by merging methods whose origins belong both to statistical mechanics and high dimensional inference \cite{Albanese,Agliari_2020,adaptive,GG_contucci_Giardina,contucci_giardina_2012,GG_original,Guerra_upper_bound,contucci_morita_nishimori,morita2005}. 

The paper is organized as follows. In Section \ref{section_2} we give the definition of the model together with its main properties, such as the self-averaging of the pressure and the Nishimori identities. In Section \ref{section_3} we extend to our multi-dimensional model the adaptive interpolation method due to Barbier and Macris \cite{adaptive} and we use it to compute the exact solution in Section \ref{section_4} by writing the pressure in the thermodynamic limit in terms of a finite-dimensional variational principle. Finally we study the main properties of the extremizers of our variational expression. The conclusions summarise the results and specify the connection of our model with an inference problem of Wigner spiked type \cite{Lenka,Jean-lenka2}. In the Appendix \ref{Appendix_A} the reader can find the details of the proof of the concentration of the magnetization in the thermodynamic limit, which ultimately leads to replica symmetry. For completeness the properties of the mono-species case (SK) on the Nishimori line are studied in Appendix \ref{Appendix_B}.

\section{Definitions and basic properties}\label{section_2}
Consider a set $\Lambda$  of indices with cardinality $|\Lambda|=N$. Let us partition $\Lambda$ in $K$ disjoint subsets:
\begin{align}
    \Lambda=\bigcup_{r=1}^K\Lambda_r,\;\quad\Lambda_r\cap\Lambda_s=\emptyset\;\forall r\neq s,\;\quad |\Lambda_r|=:N_r,\quad\alpha_r:=\frac{N_r}{N}\in(0,1)
\end{align}
Each subset will be called \emph{species} from now on. The model is defined by the following Gaussian Hamiltonian:
\begin{align}
    \label{H_MSK_NL}
    &H_N(\sigma):=-\sum_{r,s=1}^K\sum_{(i,j)\in\Lambda_r\times\Lambda_s}\tilde{J}_{ij}^{rs}\sigma_i\sigma_j
    -\sum_{r=1}^K\sum_{i\in\Lambda_r}\tilde{h}^r_i\sigma_i
    ,\\
    \label{gaussian_couplings_NL}
    &\tilde{J}_{ij}^{rs}\iid\mathcal{N}\left(\frac{\mu_{rs}}{2N},\frac{\mu_{rs}}{2N}\right),\;\quad \tilde{h}^r_i\iid\mathcal{N}(h_r,h_r)
\end{align}where $\mu_{rs}$ and $h_r$ are positive real numbers, and the $K\times K$ matrix $\mu=(\mu_{rs})_{r,s=1,\dots,K}$ can be assumed to be symmetric without loss of generality. Throughout this work, as can be seen from the previous definitions, the family of Gaussian variables \eqref{gaussian_couplings_NL} are assumed to be in a special line where mean values and variances are tied to be identical. One can see that this condition, in the context of statistical mechanics, is known as Nishimori line and was introduced in \cite{Hide_original} for the SK model with Bernoulli couplings. For the Gaussian SK at inverse temperature $\beta$ and random couplings $J_{ij}\iid\mathcal{N}\left(\frac{J_0}{2N},\frac{J}{2N}\right)$ the Nishimori line is defined by $\beta J= J_0$ (see Paragraph 4.3 in \cite{nishimori01}) which is equivalent to \eqref{gaussian_couplings_NL} when $K=1$, which explains also why we set $\beta=1$ throughout the paper without loss of generality. We will see that on the Nishimori line a special set of identities and inequalities hold. 

It is also convenient to rewrite the Hamiltonian \eqref{H_MSK_NL} in terms of centered Gaussians. To do that we introduce the following notation for species magnetizations and overlaps that will be used throughout:
\begin{align}
    &m_r(\sigma):=\frac{1}{N_r}\sum_{i\in\Lambda_r}\sigma_i,\;\quad q_r(\sigma,\tau):=\frac{1}{N_r}\sum_{i\in\Lambda_r}\sigma_i\tau_i\\
    &\mathbf{m}(\sigma):=(m_r(\sigma))_{r=1,\dots,K},\quad
    \mathbf{q}(\sigma,\tau):=(q_r(\sigma,\tau))_{r=1,\dots,K}
\end{align}where bold characters here and below stand for vectors and $\sigma,\tau\in\Sigma_N:=\{-1,1\}^N$. We also set: 
\begin{align}
    \Delta:=(\alpha_r\mu_{rs}\alpha_s)_{r,s=1,\dots,K}\,,\;\quad\hat{\alpha}:=\text{diag}(\alpha_1,\alpha_2,\dots,\alpha_K)\,,\;\quad
    \mathbf{h}:=
    (h_r)_{r=1,\dots,K}\;
    .
\end{align}We will call $\Delta$ the \emph{effective interaction matrix} because it encodes the interactions and relative sizes of the species in our model and we notice that it is positive definite if and only if $\mu$ is. See \figurename\ref{main_fig} for a scheme.
\begin{figure}[h!]
\centering
\includegraphics[width=.9\textwidth]{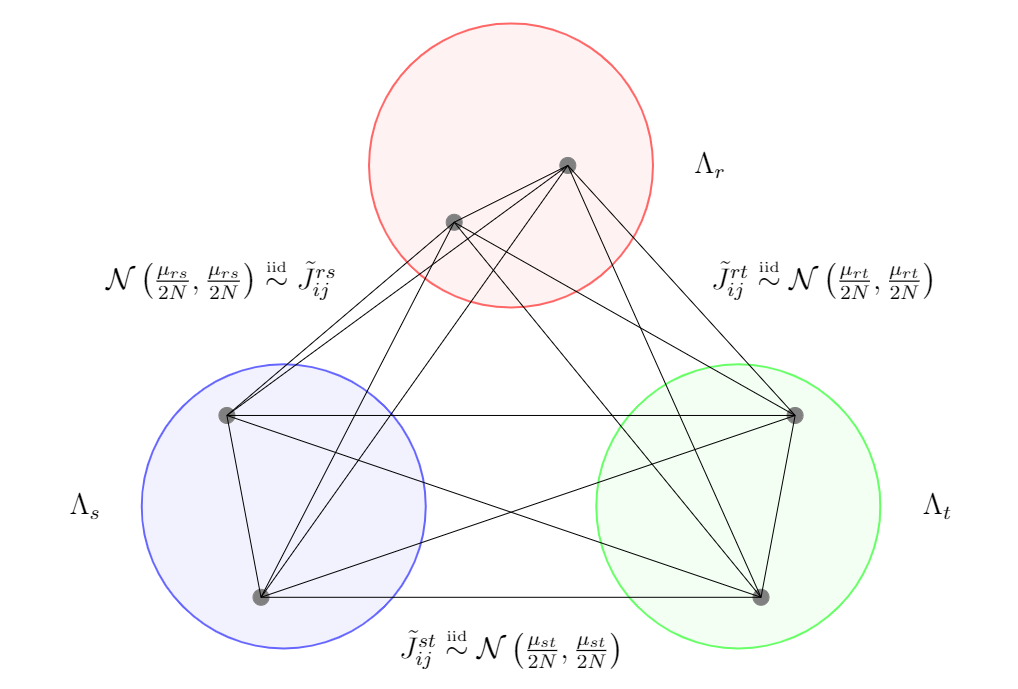}
\caption{Scheme of the structure of the interactions.}\label{main_fig}
\end{figure}

With these notations we can write a Hamiltonian in terms of centered Gaussian variables which is equivalent in distribution to the one in \eqref{H_MSK_NL}:
\begin{multline}
    \label{H_MSK_NL-bis}
    H_N(\sigma)=-\frac{1}{\sqrt{2N}}\sum_{r,s=1}^K\sum_{(i,j)\in\Lambda_r\times\Lambda_s}J_{ij}^{rs}\sigma_i\sigma_j-\sum_{r=1}^K\sum_{i\in\Lambda_r}h^r_i\sigma_i+\\-\frac{N}{2}(\mathbf{m},\Delta\mathbf{m})-N(\hat{\alpha}\mathbf{h},\mathbf{m})\,,\quad
    J_{ij}^{rs}\iid\mathcal{N}\left(0,\mu_{rs}\right),\;\quad
    h_i^r\iid\mathcal{N}(0,h_r)\;.
\end{multline}
The last expression allows us to identify the model with a multi-species Sherrington-Kirkpatrick model (SK) with the addition of a ferromagnetic interaction and a positive external field whose intensity coincide with the variances of the random terms. 

Now we define the main quantity under investigation, the random and average quenched pressure densities:
\begin{align}
    \label{press}
    &p_N:=\frac{1}{N}\log\sum_{\sigma\in\Sigma_N}\exp\left(-H_N(\sigma)
    \right)\\ 
    \label{Q_press}
    &\bar{p}_N(\mu,h):=\mathbb{E}p_N 
\end{align}
where we emphasize the dependence of the quenched pressure on the mean parameters $\mu_{rs},\, h$ and the symbol $\mathbb{E}$ stands for the Gaussian expectation with respect to the disorder.
We also introduce the Gibbs expectation:
\begin{align}
    \langle\cdot\rangle_N:=\frac{\sum_{\sigma\in\Sigma_N}e^{-H_N(\sigma)}(\cdot)}{Z_N},\,\quad Z_N:=\sum_{\sigma\in\Sigma_N}e^{-H_N(\sigma)}
\end{align}
We will denote the dependence of the Gibbs measure on further parameters with subscripts or superscripts, for example $\langle\cdot\rangle_{N,t\dots}^{(\epsilon)}$. Notice that in this context the Gibbs measure is random.

The following concentration property for the pressure density holds true. It will be an important tool to prove replica symmetry when combined with the Nishimori identities introduced in the next section.

\begin{proposition}\label{tala_concentration}
There exists $C=C(\mu,h)>0$ such that for every $x>0$
\begin{align} \label{tala_bound}
    \mathbb{P}\left(\left|p_N-\bar{p}_N(\mu,h)\right|\geq x\right) \,\leq\, 2\exp\left(-\frac{Nx^2}{4C}\right) \;.
\end{align}
As a consequence
\begin{align} \label{self_av_press}
    \mathbb{E}[(p_N-\bar{p}_N(\mu,h))^2] \,\leq\, \frac{8C}{N}  \;.
\end{align}
\end{proposition}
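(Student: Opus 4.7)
The plan is to view $p_N = N^{-1}\log Z_N$ as a smooth function of the independent Gaussian disorder $\{\tilde J^{rs}_{ij}, \tilde h^r_i\}$ and apply the Tsirelson--Ibragimov--Sudakov (Gaussian Borell) concentration inequality for Lipschitz functions. Because that inequality is translation-invariant, the means of the Gaussians play no role and I may equivalently work with the centred representation \eqref{H_MSK_NL-bis}; the deterministic pieces $-\tfrac{N}{2}(\mathbf{m},\Delta\mathbf{m})$ and $-N(\hat{\alpha}\mathbf{h},\mathbf{m})$ do not contribute to disorder derivatives and can be ignored throughout the Lipschitz estimate.

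To set up the gradient bound, I rescale to standard Gaussians by setting $g^{rs}_{ij} := J^{rs}_{ij}/\sqrt{\mu_{rs}}$ and $z^r_i := h^r_i/\sqrt{h_r}$, so that $g^{rs}_{ij}, z^r_i \iid \mathcal{N}(0,1)$. Direct differentiation of $p_N = N^{-1}\log Z_N$ yields
\[
\frac{\partial p_N}{\partial g^{rs}_{ij}} \;=\; \frac{\sqrt{\mu_{rs}}}{N\sqrt{2N}}\,\langle \sigma_i\sigma_j\rangle_N, \qquad
\frac{\partial p_N}{\partial z^r_i} \;=\; \frac{\sqrt{h_r}}{N}\,\langle \sigma_i\rangle_N,
\]
and using the pointwise bounds $|\langle\sigma_i\rangle_N|,|\langle\sigma_i\sigma_j\rangle_N|\leq 1$ the total squared Euclidean norm satisfies
\[
\|\nabla p_N\|^2 \;\leq\; \sum_{r,s}\frac{\mu_{rs}}{2N^3}\,N_r N_s \;+\; \sum_r \frac{h_r}{N^2}\,N_r \;=\; \frac{1}{N}\Bigl(\tfrac{1}{2}\sum_{r,s}\mu_{rs}\alpha_r\alpha_s + \sum_r h_r\alpha_r\Bigr).
\]
Hence $p_N$, viewed as a function of the standard Gaussian vector, is $\sqrt{C/N}$-Lipschitz with $C=C(\mu,h)$ equal (up to an absorbable constant $2$) to the bracketed expression.

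The Gaussian concentration inequality then immediately gives the sub-Gaussian tail
\[
\mathbb{P}\bigl(|p_N - \bar p_N(\mu,h)|\geq x\bigr)\;\leq\; 2\exp\!\bigl(-Nx^2/(4C)\bigr),
\]
which is \eqref{tala_bound}. The variance bound \eqref{self_av_press} follows by the layer-cake identity $\mathbb{E}[(p_N-\bar p_N)^2] = \int_0^\infty 2x\,\mathbb{P}(|p_N-\bar p_N|\geq x)\,dx$ combined with the elementary Gaussian integral $\int_0^\infty 2x\cdot 2e^{-Nx^2/(4C)}dx = 8C/N$. There is no real obstacle here: the multi-species structure only changes the constant $C$, and the a priori bound $|\sigma_i|\leq 1$ makes the Lipschitz estimate essentially term-by-term sharp; the Nishimori tying of mean and variance is irrelevant for this argument and will only be exploited in later sections.
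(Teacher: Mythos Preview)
Your proof is correct and follows essentially the same approach as the paper: rescale the disorder to standard Gaussians, bound $\|\nabla p_N\|^2$ by $C/N$ with $C=\tfrac{1}{2}(\mathbf{1},\Delta\mathbf{1})+(\hat\alpha\mathbf{h},\mathbf{1})$ using $|\langle\sigma_i\rangle_N|,|\langle\sigma_i\sigma_j\rangle_N|\le 1$, invoke Gaussian Lipschitz concentration for the tail bound, and integrate the tail to obtain the variance estimate. The only difference is cosmetic---you spell out the layer-cake integral explicitly, while the paper just says ``a tail integration finally leads to \eqref{self_av_press}.''
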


\begin{proof}
The random pressure $p_N$ is a Lipschitz function of the independent standard Gaussian variables $\hat J =(J_{ij}^{rs}/\sqrt{\mu_{rs}})_{i,j,r,s}\,$, $\hat h=(h_i^r/\sqrt{h_r})_{i,r}\,$. Indeed:
\begin{align}
   N^2\,\Vert\nabla_{\!\hat J,\hat h}\,p_N\Vert^2 \leq N\left(\frac{(\mathbf{1},\Delta\mathbf{1})}{2}+(\hat{\alpha}\mathbf{h},\mathbf{1})\right)\equiv\, C\,N
\end{align}
The inequality \eqref{tala_bound} then follows by a standard concentration property of the Gaussian measure (see Theorem 1.3.4 in \cite{Tala_vol1}). A tail integration finally leads to \eqref{self_av_press}.
\end{proof}

\subsection{Nishimori identities and correlation inequalities}
Here we will list some identities and inequalities on the Nishimori line. The identities were introduced in the original work by H. Nishimori \cite{Hide_original}, while the inequalities were noticed and proved much later \cite{contucci_morita_nishimori,morita2005}. The proof of the Nishimori identities that is most suitable for our model can be found in Paragraph 2.6 of \cite{contucci_giardina_2012}. In particular, for our purposes, we will need
\begin{align}
    \label{N_identity_1}
    &\mathbb{E}[\langle\sigma_i\rangle_N^2]=\mathbb{E}[\langle\sigma_i\rangle_N]\\
    \label{N_identity_2}
    &\mathbb{E}[\langle\sigma_i\sigma_j\rangle_N^2]=\mathbb{E}[\langle\sigma_i\sigma_j\rangle_N]\\
    \label{N_identity_3}
    &\mathbb{E}[\langle\sigma_i\rangle_N\langle\sigma_i\sigma_j\rangle_N]=\mathbb{E}[\langle\sigma_i\rangle_N\langle\sigma_j\rangle_N]
\end{align}
for all $i,j\in\Lambda$.
In particular they imply that:
\begin{multline}
    \mathbb{E}[\langle q_s\rangle_N]=\sum_{i\in\Lambda_s}\frac{1}{N_s}\mathbb{E}[\langle\sigma_i\rangle_N\langle\tau_i\rangle_N]=
    \sum_{i\in\Lambda_s}\frac{1}{N_s}\mathbb{E}[\langle\sigma_i\rangle_N^2]= \sum_{i\in\Lambda_s}\frac{1}{N_s}\mathbb{E}[\langle\sigma_i\rangle_N]=\\=\mathbb{E}[\langle m_s\rangle_N]
\end{multline}
\begin{multline}
    \mathbb{E}[\langle q_rq_s\rangle_N]=\sum_{(i,j)\in\Lambda_r\times\Lambda_s}\frac{\mathbb{E}[\langle\sigma_i\sigma_j\rangle_N\langle\tau_i\tau_j\rangle_N]}{N_rN_s}=
    \sum_{(i,j)\in\Lambda_r\times\Lambda_s}\frac{\mathbb{E}[\langle\sigma_i\sigma_j\rangle_N^2]}{N_rN_s}=\\=\mathbb{E}[\langle m_rm_s\rangle_N]
\end{multline}
    
and finally:
\begin{align}
\label{main_N_identity_quadratic}
    \mathbb{E}\Big\langle
    (\mathbf{q},\Delta\mathbf{q})
    \Big\rangle_N=\mathbb{E}\Big\langle
    (\mathbf{m},\Delta\mathbf{m})
    \Big\rangle_N\,.
\end{align}
The previous identities show that the model has a unique order parameter, that can be regarded as a magnetization or equivalently an overlap. We will choose the first point of view. This intuitive statement will acquire a precise meaning when we will write down the sum rule for the quenched pressure.

Following \cite{Nishi_id_PC,contucci_morita_nishimori,morita2005} (see  Theorem 2.18 in \cite{contucci_giardina_2012} for a straightforward proof) we obtain the I and II type correlation inequalities respectively:
\begin{align}
    \label{corr_in_1}
    &\frac{\partial \bar{p}_N}{\partial h_r}=
    \frac{1}{2N}\sum_{i\in\Lambda_r}\mathbb{E}[1+\langle\sigma_i\rangle_N]=\frac{\alpha_r}{2}[1+\mathbb{E}\langle m_r\rangle_N]\geq 0\\
    \label{corr_in_2}
    &\frac{\partial^2 \bar{p}_N}{\partial h_r\partial h_s}=\frac{\alpha_r}{2}\frac{\partial}{\partial h_s}\mathbb{E}\langle m_r\rangle_N=
    \frac{1}{2N}\sum_{(i,j)\in\Lambda_r\times\Lambda_s}
    \mathbb{E}[(\langle\sigma_i\sigma_j\rangle_N-\langle\sigma_i\rangle_N\langle\sigma_j\rangle_N)^2]\geq 0\,.
\end{align}
Analogous identities and inequalities hold for the first and second derivatives w.r.t. $\mu_{rs}$. The pressure and the first moment are monotonically increasing with respect to the Nishimori parameters $\mu_{rs}$, $h_r$. In particular the magnetization is always increasing w.r.t. the external field mean:
\begin{align}
    \label{mag_increasign_h}
    \frac{\partial\mathbb{E}\langle m_r\rangle_N}{\partial h_s}\geq 0
\end{align}
This monotonicity will be a key ingredient to prove replica symmetry.

\section{Adaptive interpolation and sum rule}\label{section_3}
In this section we build up an interpolating model with some specific features. The method here employed is an extension of the standard Guerra-Toninelli interpolation \cite{interp_guerra_2002}, also called \emph{adaptive interpolation technique}, developed in \cite{adaptive} by J. Barbier and N. Macris. 

\begin{definition}[Interpolating model] Let $t\in[0,1]$. The hamiltonian of the interpolating model is:
\begin{multline}
    \label{H_interpolating_model}
    H_\sigma(t):=-\frac{\sqrt{1-t}}{\sqrt{2N}}\sum_{r,s=1}^K\sum_{(i,j)\in\Lambda_r\times\Lambda_s}J_{ij}^{rs}\sigma_i\sigma_j-(1-t)\frac{N}{2}(
    \mathbf{m},\Delta\mathbf{m})+\\-
    \sum_{r=1}^K\sum_{i\in\Lambda_r}\left(\sqrt{Q_{\epsilon,r}(t)}J_i^r+Q_{\epsilon,r}(t)\right)\sigma_i-\sum_{r=1}^K\sum_{i\in\Lambda_r}h^r_i\sigma_i-N(\hat{\alpha}\mathbf{h},\mathbf{m})
\end{multline}
with $J_{i}^{r}\iid\mathcal{N}\left(0,1\right)$ independent of all the other Gaussian random variables, and
\begin{align*}
    \mathbf{Q}_{\epsilon}(t):=\boldsymbol{\epsilon}+\hat{\alpha}^{-1}\Delta \int_0^t\mathbf{q}_\epsilon(s)\,ds,\;\quad\epsilon_r\in[s_N,2s_N],\,s_N\propto N^{-\frac{1}{16K}} \;.
\end{align*}
Here $\mathbf{Q}_{\epsilon}=:(Q_{\epsilon,r})_{r=1,\dots,K}$, while $\mathbf{q}_\epsilon:=(q_{\epsilon,r})_{r=1,\dots,K}$ denotes a vector of $K$ non-negative functions that will be suitably chosen in the following.
\end{definition}

\begin{remark}
We notice that the interpolating model is on the Nishimori line for any $t\in [0,1]$. 
 Indeed 
\eqref{H_interpolating_model} equals in distribution the following Hamiltonian 
\begin{equation}\label{H_EMANUE}
     \tilde{H}_\sigma(t)\,=-\,\sum_{r,s=1}^K\sum_{(i,j)\in\Lambda_r\times\Lambda_s}\tilde{J}_{ij}^{rs}(t)\sigma_i\sigma_j\,-\,
    \sum_{r=1}^K\sum_{i\in\Lambda_r} \tilde{J}_i^{\epsilon,r}(t)\sigma_i\,-\,\sum_{r=1}^K\sum_{i\in\Lambda_r}\tilde{h}^r_i\sigma_i
\end{equation}
where 
\begin{equation}
\tilde{J}_{ij}^{rs}(t)\iid\mathcal{N}\left(\frac{(1-t)\mu_{rs}}{2N},\frac{(1-t)\mu_{rs}}{2N}\right),\;\quad\tilde{J}_i^{\epsilon,r}(t)\iid\mathcal{N}\left(Q_{\epsilon,r}(t),Q_{\epsilon,r}(t)\right)
\end{equation}
and  $\tilde{h}^r_i$ is defined in \eqref{gaussian_couplings_NL}. Given $t\in[0,1]$,   $\tilde{H}_\sigma(t)$ is a linear combination of independent non-centered Gaussian random variables where mean equals variance.  Therefore the Nishimori identities \eqref{N_identity_1}, \eqref{N_identity_2} and \eqref{main_N_identity_quadratic} can be used by replacing $\langle\cdot\rangle$ with the Gibbs measure induced by the interpolating hamiltonian \eqref{H_interpolating_model}, that is $\langle\cdot\rangle_{N,t}^{(\epsilon)}$.
Notice also that the role played by the functions $\mathbf{Q}_{\epsilon}(t)$ is that of an external magnetic field.
\end{remark}

The corresponding interpolating pressure will be denoted as
\begin{align}\label{interp_pressure}
    \bar{p}_{N,\epsilon}(t):=\frac{1}{N}\mathbb{E}\log\sum_\sigma e^{-H_\sigma(t)}\,.
\end{align}
In the previous equation and in the rest of the paper we drop the explicit dependence on $\mathbf{q}_\epsilon(t)$ to lighten the notation.

The following lemma will lead to the sum rule of the model.
\begin{lemma}[Interpolating pressure at $t=0,1$]\label{pressione_estremi}
Setting
\begin{align}
    \label{gas_pressure}
    \psi(Q):=\mathbb{E}_z\log2\cosh\left[z\sqrt{Q}+Q\right],\;\quad z\sim\mathcal{N}(0,1)
\end{align}
we have the following:
\begin{align}
    \label{extreme_1_press}
    \begin{split}
    \bar{p}_{N,\epsilon}(1) \,&=\, \sum_{r=1}^K\alpha_r\,\psi(Q_{\epsilon,r}(1)+h_r) \,=\\
    &=\mathcal{O}(s_N)+\sum_{r=1}^K\alpha_r\psi\left(\left(\hat{\alpha}^{-1}\Delta \int_0^1\mathbf{q}_\epsilon(t)\,dt+\mathbf{h}\right)_{\!r}\right)\end{split}\\
    \label{extreme_0_press}
    \bar{p}_{N,\epsilon}(0) \,&=\, \mathcal{O}(s_N)+\bar{p}_N(\mu,h)\,.
\end{align}
\end{lemma}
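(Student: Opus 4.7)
The plan is to evaluate each endpoint separately and absorb the $\epsilon$-dependent noise by combining independent Gaussians. The calculations themselves are essentially bookkeeping; the only real care is to make sure the recombinations respect the Nishimori structure.

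For $t=1$, every term in \eqref{H_interpolating_model} carrying the factor $\sqrt{1-t}$ or $(1-t)$ vanishes, and what remains is a purely single-site Hamiltonian. Collecting all contributions to the local field on site $i\in\Lambda_r$, I would combine the independent Gaussians $\sqrt{Q_{\epsilon,r}(1)}\,J_i^r$ and $h_i^r$ into a single centered Gaussian of variance $Q_{\epsilon,r}(1)+h_r$, and bundle the deterministic shifts $Q_{\epsilon,r}(1)$ and $h_r$ (the latter coming from $-N(\hat\alpha\mathbf{h},\mathbf{m})$) into a single shift of the same magnitude. The resulting effective field on site $i$ is $z_i\sqrt{Q_{\epsilon,r}(1)+h_r}+(Q_{\epsilon,r}(1)+h_r)$ with $z_i\sim\mathcal{N}(0,1)$, so the partition function factorizes and $(1/N)\,\mathbb{E}\log Z$ reduces to $\sum_r\alpha_r\,\psi(Q_{\epsilon,r}(1)+h_r)$, which is the first equality. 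For the second equality I would substitute the explicit form $Q_{\epsilon,r}(1)=\epsilon_r+(\hat\alpha^{-1}\Delta\int_0^1\mathbf{q}_\epsilon(t)\,dt)_r$ and Lipschitz-expand $\psi$ around $\epsilon_r=0$; that $\psi$ is globally Lipschitz follows from $|\tanh|\leq1$ together with Gaussian integration by parts applied to $\psi'(Q)=\mathbb{E}_z[(z/(2\sqrt{Q})+1)\tanh(z\sqrt{Q}+Q)]$. Since $\epsilon_r=O(s_N)$, the error is $O(s_N)$.

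For $t=0$ I would note directly that $H_\sigma(0)=H_N(\sigma)-\sum_{r}\sum_{i\in\Lambda_r}\bigl(\sqrt{\epsilon_r}\,J_i^r+\epsilon_r\bigr)\sigma_i$. The extra Gaussian piece $\sqrt{\epsilon_r}\,J_i^r+\epsilon_r$ is independent of $h_i^r$ and of the couplings, has law $\mathcal{N}(\epsilon_r,\epsilon_r)$, and when added to $h_i^r\sim\mathcal{N}(h_r,h_r)$ produces a Gaussian of law $\mathcal{N}(h_r+\epsilon_r,h_r+\epsilon_r)$, still on the Nishimori line. Hence $H_\sigma(0)$ coincides in distribution with the original Hamiltonian in which $h_r$ is replaced by $h_r+\epsilon_r$, so $\bar{p}_{N,\epsilon}(0)=\bar{p}_N(\mu,\mathbf{h}+\boldsymbol{\epsilon})$. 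The first type correlation inequality \eqref{corr_in_1} yields $0\leq \partial_{h_r}\bar{p}_N=\frac{\alpha_r}{2}(1+\mathbb{E}\langle m_r\rangle_N)\leq\alpha_r\leq 1$, so $\bar{p}_N$ is uniformly Lipschitz in $\mathbf{h}$ and the mean-value theorem gives $|\bar{p}_{N,\epsilon}(0)-\bar{p}_N(\mu,h)|\leq \|\boldsymbol{\epsilon}\|_1\leq 2Ks_N=O(s_N)$.

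The step I expect to require the most care is the combining of Gaussians in the $t=1$ analysis: one must verify that the triple $(\sqrt{Q_{\epsilon,r}(1)}J_i^r,h_i^r,Q_{\epsilon,r}(1)+h_r)$ can indeed be recast as a single term of the form $z\sqrt{Q}+Q$ with $Q=Q_{\epsilon,r}(1)+h_r$, which is exactly the place where the Nishimori condition (mean equals variance) is doing the work. Once this algebraic identification is made, the remaining Lipschitz-type estimates in $\psi$ and in $\bar{p}_N$ are elementary.
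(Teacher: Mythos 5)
Your proposal is correct and follows essentially the same route as the paper: at $t=1$ the Hamiltonian is single-site, the independent Gaussian field pieces combine (mean equals variance being preserved) into an effective field of law $\mathcal{N}(Q_{\epsilon,r}(1)+h_r,\,Q_{\epsilon,r}(1)+h_r)$ giving $\sum_r\alpha_r\psi(Q_{\epsilon,r}(1)+h_r)$, and the $\epsilon_r$'s are removed at cost $\mathcal{O}(s_N)$ via the bound $|\psi'|\leq1$, exactly as the paper does via \eqref{corr_in_1}--\eqref{properties_psi}. Your $t=0$ step merely spells out the paper's "analogously": $\bar{p}_{N,\epsilon}(0)=\bar{p}_N(\mu,\mathbf{h}+\boldsymbol{\epsilon})$ together with the type-I correlation inequality and $\epsilon_r\leq 2s_N$, which is the intended argument.
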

\begin{proof} 
Each $\epsilon_r$ can be regarded as the mean (or variance) of a small magnetic field.

At $t=1$ the system is \emph{free}, non interacting. Its pressure can be explicitly computed. Take $z_i^r\iid\mathcal{N}(0,1)$. Then:
\begin{multline*}
    \bar{p}_{N,\epsilon}(1)=\frac{1}{N}\mathbb{E}\log\prod_{r=1}^K\sum_{\sigma\in\Sigma_{N_r}}\exp\left(
    \sum_{i\in\Lambda_r}\left(\sqrt{Q_{\epsilon,r}(1)}J_i^r+Q_{\epsilon,r}(1)\right)\sigma_i+\right.\\\left.
    +\sum_{i\in\Lambda_r}(\sqrt{h_r}z^r_i+h_r)\sigma_i\right)=\\=\sum_{r=1}^K\frac{\alpha_r}{N_r}\mathbb{E}\log\sum_{\sigma\in\Sigma_{N_r}}\exp\left(
    \sum_{i\in\Lambda_r}\left(J_i^r\sqrt{Q_{\epsilon,r}(1)+h_r}+Q_{\epsilon,r}(1)+h_r\right)\sigma_i\right)
\end{multline*} where the last equality follows from the fact that $J_i^r$ and $z^r_i$ are independent standard Gaussian random variables. Finally:
\begin{align*}
    \bar{p}_{N,\epsilon}(1)=\sum_{r=1}^K\alpha_r\mathbb{E}_z\log2\cosh\left[
    z\sqrt{Q_{\epsilon,r}(1)+h_r}+Q_{\epsilon,r}(1)+h_r
    \right],\quad z\sim\mathcal{N}(0,1)\,.
\end{align*} By \eqref{corr_in_1} the derivatives of the pressure w.r.t. magnetic fields are bounded by $\alpha_r$ and then we can get rid of the explicit dependence on $\epsilon_r$ at the expense of a term $\mathcal{O}(s_N)$, thus getting \eqref{extreme_1_press}.

Analogously, by setting $t=0$, the interpolating Hamiltonian simply reduces to the original one \eqref{H_MSK_NL-bis} except for the $\epsilon_r$'s that can be neglected again at the expense of terms $\mathcal{O}(s_N)$.
\end{proof}

\begin{proposition}[Sum rule]
For any choice of the function $\mathbf{q}_\epsilon(t)$, the quenched pressure of the model \eqref{Q_press} obeys to the following sum rule:
\begin{multline}
    \label{Sum_rule_MSK_NL}
    \bar{p}_N(\mu,h)=\mathcal{O}(s_N)+\sum_{r=1}^K\alpha_r\psi(Q_{\epsilon,r}(1)+h_r)+\\
    +\int_0^1dt\,\left[\frac{
    (\mathbf{1}-\mathbf{q}_\epsilon(t),\Delta(\mathbf{1}-\mathbf{q}_\epsilon(t)))}{4}- \frac{(\mathbf{q}_\epsilon(t),\Delta\mathbf{q}_\epsilon(t))}{2}\right]+\frac{1}{4}\int_0^1dt\,R_\epsilon(t,\mu,h)
\end{multline}
where the remainder is:
\begin{align}\label{remainder_sum_rulw}
    R_\epsilon(t,\mu,h)=\mathbb{E}\Big\langle
    (\mathbf{m}-\mathbf{q}_\epsilon(t),\Delta(\mathbf{m}-\mathbf{q}_\epsilon(t)))\Big\rangle_{N,t}^{(\epsilon)}\,.
\end{align}
\end{proposition}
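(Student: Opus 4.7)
The plan is the standard fundamental theorem of calculus on the interpolating pressure: write
\begin{equation*}
\bar{p}_{N,\epsilon}(0) \,=\, \bar{p}_{N,\epsilon}(1) \,-\, \int_0^1 \bar{p}_{N,\epsilon}'(t)\,dt
\end{equation*}
and invoke Lemma \ref{pressione_estremi} to replace $\bar{p}_{N,\epsilon}(0)$ with $\bar{p}_N(\mu,h)$ and $\bar{p}_{N,\epsilon}(1)$ with the single-spin contribution, both up to $\mathcal{O}(s_N)$. Everything then reduces to computing $\bar{p}_{N,\epsilon}'(t)$ and showing that it equals the integrand appearing in \eqref{Sum_rule_MSK_NL}, up to the overall sign.

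To compute $\bar{p}_{N,\epsilon}'(t) = -\tfrac{1}{N}\mathbb{E}\langle \partial_t H_\sigma(t)\rangle_{N,t}^{(\epsilon)}$, I would differentiate \eqref{H_interpolating_model} term by term. The $t$-derivative produces four contributions: the SK-type term multiplied by $-\tfrac{1}{2\sqrt{1-t}}$, the deterministic quadratic $+\tfrac{N}{2}(\mathbf{m},\Delta\mathbf{m})$, a $-\tfrac{Q_{\epsilon,r}'(t)}{2\sqrt{Q_{\epsilon,r}(t)}}J_i^r\sigma_i$ piece from the random single-site field, and a $-Q_{\epsilon,r}'(t)\sigma_i$ piece from the deterministic part of that field. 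Here $\mathbf{Q}_\epsilon'(t)=\hat\alpha^{-1}\Delta\,\mathbf{q}_\epsilon(t)$ by definition. To the Gaussian-linear contributions (the SK couplings $J_{ij}^{rs}$ and the interpolating fields $J_i^r$) I would apply Gaussian integration by parts $\mathbb{E}[g\,F(g)]=\mathrm{Var}(g)\,\mathbb{E}[F'(g)]$, which transforms each into a single-replica cavity expression of the form $\mathbb{E}\langle 1-\sigma_i\sigma_j\langle\sigma_i\sigma_j\rangle\rangle_{N,t}^{(\epsilon)}$ or $\mathbb{E}\langle 1-\sigma_i\langle\sigma_i\rangle\rangle_{N,t}^{(\epsilon)}$.

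Collecting terms, the SK contribution assembles to $\tfrac{1}{4}(\mathbf{1},\Delta\mathbf{1}) - \tfrac{1}{4}\mathbb{E}\langle(\mathbf{q},\Delta\mathbf{q})\rangle_{N,t}^{(\epsilon)}$ after pairing two replicas; the ferromagnetic piece gives $+\tfrac{1}{2}\mathbb{E}\langle(\mathbf{m},\Delta\mathbf{m})\rangle_{N,t}^{(\epsilon)}$; the random and deterministic parts of the interpolating field combine, via $\mathbf{Q}_\epsilon'(t)=\hat\alpha^{-1}\Delta\mathbf{q}_\epsilon(t)$ and summation over $i\in\Lambda_r$ (which yields a factor $\alpha_r$), into $-\tfrac{1}{2}(\mathbf{q}_\epsilon(t),\Delta\mathbb{E}\langle\mathbf{m}\rangle_{N,t}^{(\epsilon)})$ in the random part and cancellation in the deterministic part. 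The key simplification is that the interpolating model is itself on the Nishimori line (Remark preceding \eqref{interp_pressure}), so the identity \eqref{main_N_identity_quadratic} applies to $\langle\cdot\rangle_{N,t}^{(\epsilon)}$, and I may replace $\mathbb{E}\langle(\mathbf{q},\Delta\mathbf{q})\rangle$ by $\mathbb{E}\langle(\mathbf{m},\Delta\mathbf{m})\rangle$ (and likewise $\mathbb{E}\langle\mathbf{q}_\epsilon\rangle=\mathbb{E}\langle\mathbf{m}\rangle$ where the replacement is needed on the linear overlap pieces).

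After this substitution the four contributions rearrange cleanly into a perfect square: the constant and linear-in-$\mathbf{q}_\epsilon$ terms combine into $-\tfrac{1}{4}(\mathbf{1}-\mathbf{q}_\epsilon(t),\Delta(\mathbf{1}-\mathbf{q}_\epsilon(t))) + \tfrac{1}{2}(\mathbf{q}_\epsilon(t),\Delta\mathbf{q}_\epsilon(t))$, while the remaining magnetization-quadratic part yields exactly $-\tfrac{1}{4}\mathbb{E}\langle(\mathbf{m}-\mathbf{q}_\epsilon(t),\Delta(\mathbf{m}-\mathbf{q}_\epsilon(t)))\rangle_{N,t}^{(\epsilon)}=-\tfrac{1}{4}R_\epsilon(t,\mu,h)$. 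Substituting back into the FTC expression gives \eqref{Sum_rule_MSK_NL}. The main technical obstacle is bookkeeping: tracking signs, the $\tfrac{1}{\sqrt{2N}}$ normalizations and the $\sqrt{1-t}$ vs.\ $1/\sqrt{1-t}$ cancellations in the SK term, and correctly separating the two-replica overlap contributions from single-replica ones before applying the Nishimori identity to collapse them into a single square.
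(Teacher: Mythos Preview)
Your approach is exactly the paper's: compute $\dot{\bar p}_{N,\epsilon}(t)$ by Gaussian integration by parts, use the Nishimori identity \eqref{main_N_identity_quadratic} for the interpolating measure to merge the overlap and magnetization quadratic forms into a single square, and then apply the fundamental theorem of calculus together with Lemma~\ref{pressione_estremi}.

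A word of caution on the bookkeeping you flag yourself: several of your intermediate signs are flipped. The SK contribution to $\dot{\bar p}_{N,\epsilon}(t)$ is $-\tfrac14\big[(\mathbf{1},\Delta\mathbf{1})-\mathbb{E}\langle(\mathbf{q},\Delta\mathbf{q})\rangle\big]$, the ferromagnetic piece is $-\tfrac12\,\mathbb{E}\langle(\mathbf{m},\Delta\mathbf{m})\rangle$, and the deterministic part of the interpolating field does not cancel but contributes $+(\mathbb{E}\langle\mathbf{m}\rangle,\Delta\mathbf{q}_\epsilon(t))$, while the random part gives $\tfrac12(\mathbf{1},\Delta\mathbf{q}_\epsilon(t))-\tfrac12(\mathbb{E}\langle\mathbf{q}\rangle,\Delta\mathbf{q}_\epsilon(t))$; with these corrected signs (and $\mathbb{E}\langle\mathbf{q}\rangle=\mathbb{E}\langle\mathbf{m}\rangle$, not $\mathbb{E}\langle\mathbf{q}_\epsilon\rangle$) the terms do recombine into $-\tfrac14(\mathbf{1}-\mathbf{q}_\epsilon,\Delta(\mathbf{1}-\mathbf{q}_\epsilon))+\tfrac12(\mathbf{q}_\epsilon,\Delta\mathbf{q}_\epsilon)-\tfrac14 R_\epsilon$ as you state.
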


\begin{proof}
The proof consists in computing the first derivative by using Gaussian integration by parts for the terms containing the disorder.
\begin{multline*}
    \dot{\bar{p}}_{N,\epsilon}(t)=-\frac{1}{4}\mathbb{E}\Big\langle
    (\mathbf{1},\Delta\mathbf{1})-(\mathbf{q},\Delta\mathbf{q})
    \Big\rangle_{N,t}^{(\epsilon)}-\frac{1}{2}\mathbb{E}\Big\langle
    (\mathbf{m}-\mathbf{q}_\epsilon(t),\Delta(\mathbf{m}-\mathbf{q}_\epsilon(t)))
    \Big\rangle_{N,t}^{(\epsilon)}+\\+
    \frac{1}{2}(\mathbf{q}_\epsilon(t),\Delta\mathbf{q}_\epsilon(t))+\frac{1}{2}\mathbb{E}\Big\langle
    (\mathbf{1},\Delta\mathbf{q}_\epsilon(t))-
    (\mathbf{q}_\epsilon(t),\Delta\mathbf{q})
    \Big\rangle_{N,t}^{(\epsilon)}=\\=
    -\frac{1}{4}
    (\mathbf{1}-\mathbf{q}_\epsilon(t),\Delta(\mathbf{1}-\mathbf{q}_\epsilon(t)))+\frac{1}{2}
    (\mathbf{q}_\epsilon(t),\Delta\mathbf{q}_\epsilon(t)) +\\+\frac{1}{4}\mathbb{E}\Big\langle
    (\mathbf{q}-\mathbf{q}_\epsilon(t),\Delta(\mathbf{q}-\mathbf{q}_\epsilon(t)))
    \Big\rangle_{N,t}^{(\epsilon)}-\frac{1}{2}\mathbb{E}\Big\langle
    (\mathbf{m}-\mathbf{q}_\epsilon(t),\Delta(\mathbf{m}-\mathbf{q}_\epsilon(t)))
    \Big\rangle_{N,t}^{(\epsilon)}
\end{multline*}Using the Nishimori identities \eqref{N_identity_1} and \eqref{N_identity_2} we can sum the last two terms together:
\begin{multline}
    \dot{\bar{p}}_{N,\epsilon}(t)= -\frac{1}{4}
    (\mathbf{1}-\mathbf{q}_\epsilon(t),\Delta(\mathbf{1}-\mathbf{q}_\epsilon(t))) +\frac{1}{2}
    (\mathbf{q}_\epsilon(t),\Delta\mathbf{q}_\epsilon(t))+\\ -\frac{1}{4}\underbrace{\mathbb{E}\Big\langle
    (\mathbf{m}-\mathbf{q}_\epsilon(t),\Delta(\mathbf{m}-\mathbf{q}_\epsilon(t)))
    \Big\rangle_{N,t}^{(\epsilon)}}_{R_\epsilon(t,\mu,h)}\,.
\end{multline}
The sum rule then follows from a simple application of the Fundamental Theorem of Calculus and the previous Lemma:
\begin{align}
    \bar{p}_{N,\epsilon}(0)=\mathcal{O}(s_N)+\bar{p}_N(\mu,h)=\bar{p}_{N,\epsilon}(1)-\int_0^1dt\,\dot{\bar{p}}_{N,\epsilon}(t)\,.
\end{align}

\end{proof}

\section{Solution of the model}\label{section_4}
In this section we present the main result of the paper, namely the thermodynamic limit of the model under the hypothesis of a positive semi-definite effective interaction matrix: $\Delta\geq 0$. First, we need a couple of lemmas listed below.

\begin{lemma}[Liouville's formula]\label{Liouville_formula}Consider two matrices whose elements depend on a real parameter: $\Phi(t),\,A(t)$. Suppose that $\Phi$ satisfies:
\begin{align}
    \label{Liouville_edo}
    &\dot{\Phi}(t)=A(t)\Phi(t)\\
    &\Phi(0)=\Phi_0
\end{align}Then:
\begin{align}
    \text{\emph{det}}(\Phi(t))=\text{\emph{det}}(\Phi_0)\exp\left\{
    \int_0^tds\,\text{\emph{Tr}}(A(s))
    \right\}
\end{align}
\end{lemma}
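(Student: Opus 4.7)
My plan is to reduce the matrix-valued statement to a scalar ODE for the determinant. Set $d(t) := \det(\Phi(t))$; the goal is to show that $d$ satisfies the first-order linear ODE $\dot d(t) = \text{Tr}(A(t))\,d(t)$, after which the claim follows immediately by separation of variables and the initial condition $d(0) = \det(\Phi_0)$.

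The key step is the derivative computation. I would use the multilinearity of the determinant in the rows of $\Phi$: writing $\Phi(t) = [\varphi_1(t)^\top; \ldots; \varphi_n(t)^\top]$ in terms of its rows, one has
\begin{equation}
\dot d(t) \,=\, \sum_{k=1}^n \det\bigl[\varphi_1(t)^\top;\,\ldots;\,\dot\varphi_k(t)^\top;\,\ldots;\,\varphi_n(t)^\top\bigr]\,.
\end{equation}
The hypothesis $\dot\Phi = A\Phi$ says that $\dot\varphi_k(t)^\top = \sum_{l=1}^n A_{kl}(t)\,\varphi_l(t)^\top$. Plugging this into the $k$-th summand and using multilinearity again, all terms with $l \neq k$ produce a matrix with two equal rows (up to reordering: the $l$-th row $\varphi_l^\top$ appears both in its original position and as the $k$-th row), hence vanish. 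Only the $l=k$ term survives, contributing $A_{kk}(t)\det(\Phi(t))$. Summing over $k$ gives $\dot d(t) = \text{Tr}(A(t))\,d(t)$, as desired.

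I would then conclude by integrating: since $\dot d / d = \text{Tr}(A)$ wherever $d \neq 0$, and since $d(0) = \det(\Phi_0)$, the unique solution of the scalar linear ODE is
\begin{equation}
d(t) \,=\, \det(\Phi_0)\,\exp\!\left\{\int_0^t \text{Tr}(A(s))\,ds\right\}\,.
\end{equation}
The argument is uniform: there is no need to separately treat the case $\det(\Phi_0) = 0$ because the linear ODE for $d$ has $d \equiv 0$ as its unique solution in that case, consistent with the claimed formula.

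I do not foresee a real obstacle here; the only delicate point is justifying the derivative of the determinant, which is handled by the row-wise multilinear expansion above (equivalently, by invoking Jacobi's formula $\tfrac{d}{dt}\det\Phi = \text{Tr}(\text{adj}(\Phi)\dot\Phi)$, which holds without any invertibility assumption on $\Phi(t)$).
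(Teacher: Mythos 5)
Your proof is correct: the row-wise multilinear expansion of $\frac{d}{dt}\det\Phi$, the cancellation of all $l\neq k$ terms by repeated rows, and the resulting scalar linear ODE $\dot d=\mathrm{Tr}(A)\,d$ with its unique solution constitute the standard (Jacobi/Abel--Liouville) argument, and your handling of the possibly singular case via linearity of the ODE rather than division by $d$ is the right way to make it uniform. Note that the paper itself states this lemma as a classical fact and gives no proof at all, so there is nothing to compare against; your write-up simply supplies the standard derivation (implicitly assuming, as one should state, enough regularity of $A$, e.g.\ continuity of $\mathrm{Tr}\,A$, to invoke uniqueness for the scalar linear ODE).
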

\begin{definition}[Regularity of $\boldsymbol{\epsilon}\longmapsto\mathbf{Q}_{\epsilon}(\cdot)$]\label{def_regularity}
We will say that the map $\boldsymbol{\epsilon}\longmapsto\mathbf{Q}_{\epsilon}(\cdot)$ is \emph{regular} if
\begin{align}
    \text{det} \left(\frac{\partial\mathbf{Q_\epsilon}(t)}{\partial\boldsymbol{\epsilon}}\right)\geq 1\quad\forall t\in[0,1]
\end{align}
\end{definition}
\begin{remark}
Choosing $\mathbf{Q}_{\epsilon}$ as the solution of the following ODE:
\begin{align}
\dot{\mathbf{Q}}_\epsilon(t)=\hat{\alpha}^{-1}\Delta\,\mathbb{E}\langle\mathbf{m}\rangle_{N,t}^{(\epsilon)},\;\quad  \mathbf{Q}_\epsilon(0)=\boldsymbol{\epsilon}  
\end{align}the map $\epsilon\longmapsto\mathbf{Q}_{\epsilon}(\cdot)$ turns out to be regular. Indeed
\begin{align}
    \frac{d}{dt}\frac{\partial\mathbf{Q_\epsilon}(t)}{\partial\boldsymbol{\epsilon}}=\underbrace{\frac{\partial}{\partial\mathbf{Q_\epsilon}(t)}\hat{\alpha}^{-1}\Delta\mathbb{E}\langle\mathbf{m}\rangle_{N,t}^{(\epsilon)}}_{=:A(t)} \frac{\partial\mathbf{Q_\epsilon}(t)}{\partial\boldsymbol{\epsilon}}\;;
\end{align}
since $Q_{\epsilon,r}(t)$ can be regarded as the variance of a magnetic field on the Nishimori line in \eqref{H_EMANUE} and the entries of $\hat{\alpha}^{-1}$ and $\Delta$ are non-negative we have:
\begin{align}
    \text{Tr}A(t)\geq 0\,,
\end{align}
by the correlation inequalities of type II \eqref{corr_in_2}, \eqref{mag_increasign_h}. Finally using Liouville's formula we get:
\begin{align}
    \det\left(\frac{\partial\mathbf{Q_\epsilon}(t)}{\partial\boldsymbol{\epsilon}}\right)=\text{det}\underbrace{\left(\frac{\partial\mathbf{Q_\epsilon}(0)}{\partial\boldsymbol{\epsilon}}\right)}_{=\mathbbm{1}}\exp\left\{\int_0^tds\,\text{\text{Tr}}(A(s))\right\}\geq 1
\end{align}
We stress that the sign of $\Delta$ plays no role yet, since we have used only the positivity of its entries so far.
\end{remark}

\begin{lemma}[Concentration]\label{concentration_lemma}
Suppose $\boldsymbol{\epsilon}\longmapsto\mathbf{Q}_\epsilon(\cdot)$ is a regular map. Consider the quantity:
\begin{align}
    \mathcal{L}_r:=\frac{1}{N_r}\sum_{i\in\Lambda_r}\left(\sigma_i+\frac{J_i^r\sigma_i}{2\sqrt{Q_{\epsilon,r}(t)}}\right),\;\quad J_i^r\iid\mathcal{N}(0,1)
\end{align}
and introduce the $\boldsymbol{\epsilon}$-average:
\begin{align}
    \mathbb{E}_{\boldsymbol{\epsilon}} [\cdot]=\prod_{r=1}^K\left(
    \frac{1}{s_N}\int_{s_N}^{2s_N}d\epsilon_r
    \right)(\cdot) \;.
\end{align}
We have:
\begin{align}
\mathbb{E}_{\boldsymbol{\epsilon}}\mathbb{E}\Big\langle\left(\mathcal{L}_r-\mathbb{E}\langle\mathcal{L}_r\rangle_{N,t}^{(\epsilon)}\right)^2\Big\rangle_{N,t}^{(\epsilon)}\longrightarrow 0\;,\quad\text{when }N\to\infty
\end{align}
and
\begin{align}
    \label{Concentration_1}
    \mathbb{E}\Big\langle\left(\mathcal{L}_r-\mathbb{E}\langle\mathcal{L}_r\rangle_{N,t}^{(\epsilon)}\right)^2\Big\rangle_{N,t}^{(\epsilon)} \,\geq\, \frac{1}{4}
    \mathbb{E}\Big\langle\left(m_r-\mathbb{E}\langle m_r\rangle_{N,t}^{(\epsilon)}\right)^2\Big\rangle_{N,t}^{(\epsilon)}
\end{align}
therefore the magnetization (or the overlap) concentrates in $\boldsymbol{\epsilon}$-average.
\end{lemma}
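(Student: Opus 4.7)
The plan is to decompose the variance of $\mathcal{L}_r$ into thermal and quenched pieces, bound each separately via the adaptive-interpolation structure, and then establish the pointwise inequality \eqref{Concentration_1} by an explicit computation. Setting $Z_r:=\frac{1}{2\sqrt{Q_{\epsilon,r}(t)}\,N_r}\sum_{i\in\Lambda_r}J_i^r\sigma_i$ so that $\mathcal{L}_r=m_r+Z_r$, a first Gaussian integration by parts on the $J_i^r$'s combined with \eqref{N_identity_1} already produces the useful identity $\mathbb{E}\langle\mathcal{L}_r\rangle_{N,t}^{(\epsilon)}=\tfrac12\bigl(1+\mathbb{E}\langle m_r\rangle_{N,t}^{(\epsilon)}\bigr)$; the relation $-N_r\mathcal{L}_r=\partial H_\sigma(t)/\partial Q_{\epsilon,r}(t)$ then drives the rest.

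For the thermal variance $V^{\mathrm{th}}_r:=\mathbb{E}\langle(\mathcal{L}_r-\langle\mathcal{L}_r\rangle_{N,t}^{(\epsilon)})^2\rangle_{N,t}^{(\epsilon)}$, the standard fluctuation--dissipation identity plus a Gaussian IBP yields
\begin{equation*}
V^{\mathrm{th}}_r=\frac{1}{N_r}\,\frac{\partial\,\mathbb{E}\langle\mathcal{L}_r\rangle_{N,t}^{(\epsilon)}}{\partial Q_{\epsilon,r}(t)}+\frac{1-\mathbb{E}\langle m_r\rangle_{N,t}^{(\epsilon)}}{4N_rQ_{\epsilon,r}(t)}\,,
\end{equation*}
in which the second summand is $O(1/(N_rs_N))$ since $Q_{\epsilon,r}(t)\geq s_N$. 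I would integrate the first summand over $\boldsymbol\epsilon\in[s_N,2s_N]^K$ and perform the change of variables $\boldsymbol\epsilon\mapsto\mathbf{Q}_\epsilon(t)$: by Definition \ref{def_regularity} the Jacobian determinant is $\geq 1$, so $d^K\boldsymbol\epsilon\leq d^K\mathbf{Q}$; integrating $\partial_{Q_r}\mathbb{E}\langle\mathcal{L}_r\rangle$ along the $Q_r$-fibre then gives a boundary term of order one (as $\mathbb{E}\langle\mathcal{L}_r\rangle\in[0,1]$), while the transverse $K{-}1$ directions contribute a bounded factor. Dividing by the sampling volume $s_N^K\propto N^{-1/16}$ and by $N_r\propto N$, the $\boldsymbol\epsilon$-average of $V^{\mathrm{th}}_r$ is $O(N^{-15/16})\to 0$.

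For the quenched variance $V^{\mathrm{dis}}_r:=\mathbb{E}[(\langle\mathcal{L}_r\rangle_{N,t}^{(\epsilon)}-\mathbb{E}\langle\mathcal{L}_r\rangle_{N,t}^{(\epsilon)})^2]$ I would exploit $\alpha_r\langle\mathcal{L}_r\rangle_{N,t}^{(\epsilon)}=\partial_{Q_{\epsilon,r}(t)}p_{N,\epsilon}(t)$ together with the fact that the random interpolating pressure is convex in $\sqrt{Q_{\epsilon,r}(t)}$ on the Nishimori line (the corresponding second derivative is a Gibbs variance plus a contribution proportional to $\langle m_r\rangle\geq 0$). Proposition \ref{tala_concentration} gives $|p_{N,\epsilon}-\bar p_{N,\epsilon}|=O(N^{-1/2})$ with high probability, and the standard convexity (Griffiths) lemma transfers this to concentration of the derivative, so $V^{\mathrm{dis}}_r\to 0$ uniformly on the sampling box.

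Finally, for the inequality \eqref{Concentration_1}, Gaussian IBP combined with \eqref{N_identity_3} yields $\mathbb{E}\langle m_rZ_r\rangle=\tfrac12(\bar m_r-U)$ where $\bar m_r:=\mathbb{E}\langle m_r\rangle$ and $U:=\mathbb{E}\langle m_r\rangle^2$; a double IBP plus the three-replica Nishimori identity $\mathbb{E}\langle\sigma_i\rangle\langle\sigma_j\rangle\langle\sigma_i\sigma_j\rangle=\mathbb{E}\langle\sigma_i\rangle\langle\sigma_j\rangle$ (a consequence of the planted/Bayesian reading of the Nishimori line with $\sigma^*=\mathbf{1}$) gives
\begin{equation*}
\mathbb{E}\langle Z_r^2\rangle=\frac{1}{4Q_{\epsilon,r}(t)N_r}+\frac{2U-T-\bar m_r^2}{4},\quad T:=\mathbb{E}\langle m_r^2\rangle.
\end{equation*}
Expanding $(\mathcal{L}_r-\mathbb{E}\langle\mathcal{L}_r\rangle)^2$ and substituting, the slack collapses to $\tfrac{T-U}{2}+\tfrac{1}{4Q_{\epsilon,r}(t)N_r}\geq 0$, the non-negativity being Jensen's inequality applied to the Gibbs average. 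The main obstacle I anticipate is the Griffiths-style transfer of pressure concentration to derivative concentration in the random multi-parameter setting, where one must carefully match the scales of $s_N$ and $N^{-1/2}$ so that both pieces still go to zero after $\boldsymbol\epsilon$-averaging.
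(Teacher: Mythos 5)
Your decomposition into thermal and quenched fluctuations, the identity $\mathbb{E}\langle\mathcal{L}_r\rangle_{N,t}^{(\epsilon)}=\tfrac12(1+\mathbb{E}\langle m_r\rangle_{N,t}^{(\epsilon)})$, the bound on the thermal variance through $\partial^2_{Q_{\epsilon,r}}\bar p_{N,\epsilon}$ combined with the $\boldsymbol\epsilon$-average and the regular change of variables $\boldsymbol\epsilon\mapsto\mathbf{Q}_\epsilon$, and the proof of \eqref{Concentration_1} by Gaussian integration by parts plus the Nishimori identities are all the same steps as in the paper, and your final slack $\tfrac{T-U}{2}+\tfrac{1}{4N_rQ_{\epsilon,r}}$ is exactly what the paper finds (your displayed intermediate value of $\mathbb{E}\langle Z_r^2\rangle$ is off by $\tfrac14(1-\bar m_r)^2$ — the correct constant block is $\tfrac14(1-2\bar m_r+2U-T)$ — but this is an arithmetic slip that does not affect the conclusion).

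The genuine gap is in the quenched part. You invoke a Griffiths-type transfer of concentration from $p_{N,\epsilon}$ to its $Q_{\epsilon,r}$-derivative on the grounds that the \emph{random} interpolating pressure is convex in $\sqrt{Q_{\epsilon,r}}$, "the second derivative being a Gibbs variance plus a contribution proportional to $\langle m_r\rangle\geq0$". That positivity is false realization-wise: with $y=\sqrt{Q_{\epsilon,r}}$ one finds $N\partial^2_y p_{N,\epsilon}=\mathrm{Var}_{\langle\cdot\rangle}\bigl(\sum_{i\in\Lambda_r}(J_i^r+2y)\sigma_i\bigr)+2N_r\langle m_r\rangle_{N,t}^{(\epsilon)}$, and $\langle m_r\rangle_{N,t}^{(\epsilon)}$ is a thermal average at \emph{fixed} disorder, which is negative with positive probability (already for a single spin whose effective field $h_i^r+\sqrt{Q_{\epsilon,r}}J_i^r+Q_{\epsilon,r}+h_r$ is negative); the Nishimori structure only gives $\mathbb{E}\langle\sigma_i\rangle=\mathbb{E}\langle\sigma_i\rangle^2\geq0$ \emph{after} the disorder average. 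Since the convexity lemma must be applied to the random function for each realization, your argument breaks at exactly this point, and it is the point the paper handles by a convexification: it works with $\hat p_{N,\epsilon}=p_{N,\epsilon}-\alpha_r\sqrt{Q_{\epsilon,r}}\,N_r^{-1}\sum_{i\in\Lambda_r}|J_i^r|$, which is convex in $Q_{\epsilon,r}$ for every realization because $|J_i^r|\geq J_i^r\langle\sigma_i\rangle$, and then it must additionally control the fluctuation $A_r$ of $N_r^{-1}\sum_i(|J_i^r|-\mathbb{E}|J_i^r|)$ and the difference-quotient errors $C_\delta^\pm$, which is precisely where the scale matching $\delta=s_N^{2K/3}N^{-1/3}$ with $s_N\propto N^{-1/(16K)}$ enters (so the obstacle you flag is real, but it arises only after a convexification you have not supplied). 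A deterministic alternative would be to add $\alpha_r Q_{\epsilon,r}=\alpha_r y^2$, whose second $y$-derivative $2\alpha_r N$ dominates $2N_r\langle m_r\rangle\geq-2N_r$; either way, some convexification step is indispensable, and without it the quenched part of your proof does not go through.
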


The proof, simple but lengthy (see the Appendix), is based on controlling the thermal and disorder-related fluctuations of $\mathcal{L}_r$. This implies the control of the fluctuations of the magnetization thus ensuring the replica symmetry of the model, which is independent of positive definiteness of $\Delta$ and depends only on the positivity of its elements.

We have laid the ground for our main result: the computation of the quenched pressure in the thermodynamic limit in form of a finite dimensional (due to the concentration lemma) variational principle.
\begin{theorem}[Thermodynamic limit]\label{main_theorem} On the Nishimori line, when $\Delta\geq 0$, the thermodynamic limit of the pressure $\bar{p}(\mu,h):=\lim_{N\to\infty}\bar{p}_N(\mu,h)$ exists and:
\begin{align}\label{solution_of_the_model}
    \bar{p}(\mu,h)=\sup_{\mathbf{x}\in\mathbb{R}_{\geq 0}^K}\bar{p}(\mu,h;\mathbf{x})
\end{align}where
\begin{align}\label{p_variaz_MSKNL}
    \bar{p}(\mu,h;\mathbf{x}):=
    \frac{(\mathbf{1}-\mathbf{x},\Delta(\mathbf{1}-\mathbf{x}))}{4}-\frac{(\mathbf{x},\Delta\mathbf{x})}{2}
    + \sum_{r=1}^K\alpha_r\psi((\hat{\alpha}^{-1}\Delta\mathbf{x}+\mathbf{h})_r)
\end{align}with the following stationary condition:
\begin{align}
    \label{consist_eq}
    \mathbf{x}-\mathbb{E}_z\tanh\left(z\sqrt{\hat{\alpha}^{-1}\Delta\mathbf{x}+\mathbf{h}}+\hat{\alpha}^{-1}\Delta\mathbf{x}+\mathbf{h}\right)\in\text{Ker}\Delta\,,\quad z\sim\mathcal{N}(0,1)
\end{align}
\end{theorem}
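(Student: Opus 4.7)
The plan is to sandwich $\bar{p}_N$ between matching lower and upper bounds by choosing the adaptive path $\mathbf{q}_\epsilon(t)$ in the sum rule \eqref{Sum_rule_MSK_NL} in two different ways. The hypothesis $\Delta\geq 0$ will enter twice: to kill the remainder $R_\epsilon$ by sign in the lower bound, and to enable a Jensen step in the upper bound.

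For the lower bound, I would fix any $\mathbf{x}\in\mathbb{R}^K_{\geq 0}$ and take the constant path $\mathbf{q}_\epsilon(t)\equiv \mathbf{x}$. Then $\mathbf{Q}_\epsilon(1)=\boldsymbol{\epsilon}+\hat{\alpha}^{-1}\Delta\mathbf{x}$, the quadratic integrand in \eqref{Sum_rule_MSK_NL} is $t$-independent, and the deterministic part of the sum rule collapses to $\bar{p}(\mu,h;\mathbf{x})+\mathcal{O}(s_N)$. Since $\Delta\geq 0$, the remainder $R_\epsilon(t,\mu,h)=\mathbb{E}\langle(\mathbf{m}-\mathbf{x},\Delta(\mathbf{m}-\mathbf{x}))\rangle_{N,t}^{(\epsilon)}$ is non-negative and can be dropped, giving $\bar{p}_N\geq\bar{p}(\mu,h;\mathbf{x})+\mathcal{O}(s_N)$; letting $N\to\infty$ and taking the sup over $\mathbf{x}$ yields $\liminf_N\bar{p}_N\geq\sup_{\mathbf{x}\in\mathbb{R}^K_{\geq 0}}\bar{p}(\mu,h;\mathbf{x})$.

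For the upper bound, I would take the adaptive choice $\mathbf{q}_\epsilon(t)=\mathbb{E}\langle\mathbf{m}\rangle_{N,t}^{(\epsilon)}$: the Nishimori identity $\mathbb{E}\langle m_r\rangle=\mathbb{E}[\langle\sigma_i\rangle_N^2]\geq 0$ places this path in $\mathbb{R}^K_{\geq 0}$, while the Remark just above Lemma \ref{concentration_lemma} ensures that the induced map $\boldsymbol{\epsilon}\mapsto\mathbf{Q}_\epsilon(\cdot)$ is regular, so Lemma \ref{concentration_lemma} applies. Setting $\bar{\mathbf{q}}_\epsilon:=\int_0^1\mathbf{q}_\epsilon(t)\,dt\in\mathbb{R}^K_{\geq 0}$, one has $\mathbf{Q}_\epsilon(1)=\boldsymbol{\epsilon}+\hat{\alpha}^{-1}\Delta\bar{\mathbf{q}}_\epsilon$. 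The map $\mathbf{y}\mapsto\frac{(\mathbf{1}-\mathbf{y},\Delta(\mathbf{1}-\mathbf{y}))}{4}-\frac{(\mathbf{y},\Delta\mathbf{y})}{2}$ has Hessian $-\tfrac12\Delta\leq 0$ and is therefore concave, so Jensen's inequality upper-bounds the quadratic integral by its value at $\bar{\mathbf{q}}_\epsilon$; altogether the deterministic part of the sum rule is dominated by $\bar{p}(\mu,h;\bar{\mathbf{q}}_\epsilon)+\mathcal{O}(s_N)\leq\sup_{\mathbf{x}\geq 0}\bar{p}(\mu,h;\mathbf{x})+\mathcal{O}(s_N)$. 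Averaging the sum rule over $\boldsymbol{\epsilon}$ and invoking Lemma \ref{concentration_lemma} together with the boundedness of the entries of $\Delta$ gives $\mathbb{E}_{\boldsymbol{\epsilon}}\int_0^1 R_\epsilon(t,\mu,h)\,dt\to 0$, whence $\limsup_N\bar{p}_N\leq\sup_{\mathbf{x}\geq 0}\bar{p}(\mu,h;\mathbf{x})$.

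For the stationary condition \eqref{consist_eq}, I would differentiate \eqref{p_variaz_MSKNL} directly. A single Gaussian integration by parts yields the Nishimori-line identity $\mathbb{E}_z\tanh^2(z\sqrt{Q}+Q)=\mathbb{E}_z\tanh(z\sqrt{Q}+Q)$, which simplifies $\psi'(Q)=\tfrac12(1+\mathbb{E}_z\tanh(z\sqrt{Q}+Q))$. Plugging this in and collecting terms produces $\nabla_{\mathbf{x}}\bar{p}(\mu,h;\mathbf{x})=-\tfrac12\Delta\bigl[\mathbf{x}-\mathbb{E}_z\tanh\bigl(z\sqrt{\hat{\alpha}^{-1}\Delta\mathbf{x}+\mathbf{h}}+\hat{\alpha}^{-1}\Delta\mathbf{x}+\mathbf{h}\bigr)\bigr]$, so setting $\nabla\bar p=0$ at a critical point gives \eqref{consist_eq}. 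The main obstacle is the upper bound, which requires the adaptive path to be simultaneously componentwise non-negative (so $\bar{\mathbf{q}}_\epsilon$ is feasible for the variational problem) and regular (so Lemma \ref{concentration_lemma} wipes out the remainder in $\boldsymbol{\epsilon}$-average), while separately leveraging $\Delta\geq 0$ for the Jensen step — all three features are delivered at once by combining the Nishimori-line structure with the adaptive choice of the interpolation.
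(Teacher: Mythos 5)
Your proof is correct, and it matches the paper's architecture (sum rule, lower bound by fixing a constant path and dropping the non-negative remainder, upper bound by the adaptive ODE path plus the concentration lemma, stationary condition by direct differentiation). The one genuine deviation is the Jensen step in the upper bound. The paper exploits convexity of the scalar function $\psi$ — itself a Nishimori-line fact proved via the type-II correlation inequality \eqref{corr_in_2} — to push the time-integral \emph{inside} $\psi$, so that the entire integrand in the sum rule becomes $\bar{p}(\mu,h;\mathbf{q}_\epsilon(t))$ and can be bounded pointwise by the supremum. You instead exploit concavity of the quadratic form $\mathbf{y}\mapsto\frac{(\mathbf{1}-\mathbf{y},\Delta(\mathbf{1}-\mathbf{y}))}{4}-\frac{(\mathbf{y},\Delta\mathbf{y})}{2}$ (Hessian $-\tfrac12\Delta\leq0$, immediate from the hypothesis $\Delta\geq 0$) to pull the time-integral \emph{out} of the quadratic, so that the deterministic part collapses to $\bar{p}(\mu,h;\bar{\mathbf{q}}_\epsilon)$ with $\bar{\mathbf{q}}_\epsilon=\int_0^1\mathbf{q}_\epsilon(t)\,dt$, again bounded by the supremum. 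Both routes land on the same inequality; yours has the small advantage of being elementary (no correlation inequality needed for the Jensen step, only $\Delta\geq0$ which is already assumed), whereas the paper's has the feature of not reusing $\Delta\geq0$ in the upper bound, isolating it entirely in the sign of the remainder. You also correctly observe that the feasibility $\bar{\mathbf{q}}_\epsilon\in\mathbb{R}^K_{\geq0}$ of the averaged path, needed for the supremum to dominate, is itself a Nishimori identity, a point the paper uses implicitly.
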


\begin{proof}
Let us divide the proof in two steps.
\paragraph{Lower Bound:} We initially fix $\mathbf{q}_\epsilon(t)=\mathbf{x}\in\mathbb{R}_{\geq 0}^K$ in \eqref{Sum_rule_MSK_NL}. Up to orders $\mathcal{O}(s_N)$ we find:
\begin{multline}\label{lower_bound_proof}
    \bar{p}_N(\mu,h)=\mathcal{O}(s_N)+\frac{(\mathbf{1}-\mathbf{x},\Delta(\mathbf{1}-\mathbf{x}))}{4}-\frac{(\mathbf{x},\Delta\mathbf{x})}{2}
    +\\+ \sum_{r=1}^K\alpha_r\mathbb{E}_z\log2\cosh\left(z\sqrt{(\hat{\alpha}^{-1}\Delta\mathbf{x}+\mathbf{h})_r}+(\hat{\alpha}^{-1}\Delta\mathbf{x}+\mathbf{h})_r\right)+\\+\frac{1}{4}\int_0^1dt\,R_\epsilon(t,\mu,h)
\end{multline}We have exploited the result in Lemma \ref{pressione_estremi}.
Being $\Delta$ positive semi-definite, the rest has a positive sign, for it is a quadratic form exactly with matrix $\Delta$.

Hence:
\begin{align*}
    \bar{p}_N(\mu,h)\geq \mathcal{O}(s_N)+\frac{(\mathbf{1}-\mathbf{x},\Delta(\mathbf{1}-\mathbf{x}))}{4}-\frac{(\mathbf{x},\Delta\mathbf{x})}{2}
    +\sum_{r=1}^K\alpha_r\psi((\hat{\alpha}^{-1}\Delta\mathbf{x}+\mathbf{h})_r)
\end{align*}Then, taking the $\liminf_{N\to\infty}$ on both sides and optimizing with $\sup_\mathbf{x}$ we get the first bound:
\begin{align}
    \liminf_{N\to\infty}\bar{p}_N\geq\sup_\mathbf{x}\left\{
    \frac{(\mathbf{1}-\mathbf{x},\Delta(\mathbf{1}-\mathbf{x}))}{4}-\frac{(\mathbf{x},\Delta\mathbf{x})}{2}
    +\sum_{r=1}^K\alpha_r\psi((\hat{\alpha}^{-1}\Delta\mathbf{x}+\mathbf{h})_r)
    \right\}\,.
\end{align}

\paragraph{Upper Bound:} We  start with a key observation: $\psi(\cdot)$ is a convex function. It can be seen as a consequence of the correlation inequalities \eqref{corr_in_1} \eqref{corr_in_2} on the Nishimori line. In fact, $\psi(Q)$ can be recast in the following way:
\begin{align*}
    \psi(Q)=\mathbb{E}_z\log\sum_{\sigma=\pm1}e^{\sigma(z\sqrt{Q}+Q)}=\mathbb{E}_{z(Q)}\log\sum_{\sigma=\pm1}e^{\sigma z(Q)},\;z(Q)\sim\mathcal{N}(Q,Q)\,.
\end{align*}This is a simple 1-particle, free system on the Nishimori line. For this model we have:
\begin{align}\label{properties_psi}
    \frac{\partial \psi}{\partial Q}=\frac{1}{2}\mathbb{E}_z[1+\langle\sigma\rangle]\,,\quad\frac{\partial^2 \psi}{\partial Q^2}=\frac{1}{2}\mathbb{E}[(1-\langle\sigma\rangle^2)^2]\,,\quad\langle\sigma\rangle=\frac{\sum_{\sigma=\pm1}e^{\sigma z(Q)}\sigma}{\sum_{\sigma=\pm1}e^{\sigma z(Q)}}\,.
\end{align} This allows us to use Jensen's inequality to extract the integral in $Q_{\epsilon,r}(1)$ from the terms containing $\psi$ in \eqref{extreme_1_press} (in Lemma \ref{pressione_estremi}). More explicitly
\begin{align*}
    \sum_{r=1}^K\alpha_r\psi\left(\left(\hat{\alpha}^{-1}\Delta \int_0^1\mathbf{q}_\epsilon(t)\,dt+\mathbf{h}\right)_{\!r}\right)\leq\sum_{r=1}^K\alpha_r\int_0^1\psi\left(\left(\hat{\alpha}^{-1}\Delta \mathbf{q}_\epsilon(t)+\mathbf{h}\right)_{\!r}\right)\,dt\,.
\end{align*}
By inserting the previous inequality in the sum rule \eqref{Sum_rule_MSK_NL} we have that:
\begin{multline}
    \bar{p}_N(\mu,h)\leq \mathcal{O}(s_N)+\int_0^1dt\,  \left[\frac{(\mathbf{1}-\mathbf{q}_\epsilon(t),\Delta(\mathbf{1}-\mathbf{q}_\epsilon(t)))}{4}-\frac{(\mathbf{q}_\epsilon(t),\Delta\mathbf{q}_\epsilon(t))}{2}\right.
    +\\+ \left.\sum_{r=1}^K\alpha_r\psi((\hat{\alpha}^{-1}\Delta\mathbf{q}_\epsilon(t)+\mathbf{h})_r)\right]+\frac{1}{4}\int_0^1dt\,R_\epsilon(t,\mu,h)\leq\\\leq
    \mathcal{O}(s_N)+\sup_{\mathbf{x}}\left\{
    \frac{(\mathbf{1}-\mathbf{x},\Delta(\mathbf{1}-\mathbf{x}))}{4}-\frac{(\mathbf{x},\Delta\mathbf{x})}{2}
    + \sum_{r=1}^K\alpha_r\psi((\hat{\alpha}^{-1}\Delta\mathbf{x}+\mathbf{h})_r)
    \right\}+\\+\frac{1}{4}\int_0^1dt\,R_\epsilon(t,\mu,h)
\end{multline}
If we finally take the expectation $\mathbb{E}_{\boldsymbol{\epsilon}}$ on both sides of the previous inequality we get:
\begin{multline}\label{aux_1}
    \bar{p}_N(\mu,h)\leq\mathcal{O}(s_N)+\sup_{\mathbf{x}}\left\{
    \frac{(\mathbf{1}-\mathbf{x},\Delta(\mathbf{1}-\mathbf{x}))}{2}-(\mathbf{x},\Delta\mathbf{x})
    +\right.\\\left.+ \sum_{r=1}^K\alpha_r\psi((\hat{\alpha}^{-1}\Delta\mathbf{x}+\mathbf{h})_r)
    \right\}+\frac{1}{2}\mathbb{E}_{\boldsymbol{\epsilon}}\int_0^1dt\,R_\epsilon(t,\mu,h)
\end{multline}
Recall that the remainder, defined in \eqref{remainder_sum_rulw}, depends on the functions $\mathbf{q}_\epsilon(t)$. This time we choose $\mathbf{q}_\epsilon(t)$ according to a different criterion. We would like to have: $\Delta\mathbf{q}_\epsilon(t)=\Delta\mathbb{E}\langle\mathbf{m}\rangle^{(\epsilon)}_{N,t}$. In this way we could use the concentration Lemma \ref{concentration_lemma}. This can be achieved through the following ODE:
\begin{align}
    \dot{\mathbf{Q}}_\epsilon(t)=\hat{\alpha}^{-1}\Delta\mathbb{E}\langle\mathbf{m}\rangle_{N,t}^{(\epsilon)}=:\mathbf{F}(t,\mathbf{Q}_\epsilon(t)),\quad\mathbf{Q}_\epsilon(0)=\boldsymbol{\epsilon}
\end{align}
As seen in \eqref{corr_in_2}, the derivatives of $\mathbf{F}$ are positive and bounded for any fixed $N$. This guarantees the existence of a unique solution over $[0,1]$.

Then, exchanging the two integrals by Fubini's theorem in \eqref{aux_1}, and applying Lemma \ref{concentration_lemma} we get:
\begin{align*}
    \limsup_{N\to\infty}\bar{p}_N\leq \sup_{\mathbf{x}}\left\{
    \frac{(\mathbf{1}-\mathbf{x},\Delta(\mathbf{1}-\mathbf{x}))}{4}-\frac{(\mathbf{x},\Delta\mathbf{x})}{2}
    + \sum_{r=1}^K\alpha_r\psi((\hat{\alpha}^{-1}\Delta\mathbf{x}+\mathbf{h})_r)
    \right\}\,.
\end{align*}
The two bounds match and this proves \eqref{solution_of_the_model}. Moreover, using the properties \eqref{properties_psi} the gradient of \eqref{solution_of_the_model} is:
\begin{multline}\label{gradient_p_var}
    \nabla_{\mathbf{x}}\bar{p}(\mu,h;\mathbf{x})=-\frac{\Delta}{2}(\mathbf{1}-\mathbf{x})-\Delta\mathbf{x}+\frac{\Delta}{2}\mathbf{1}+\\+\frac{\Delta}{2}\mathbb{E}_z\tanh\left(z\sqrt{\hat{\alpha}^{-1}\Delta\mathbf{x}+\mathbf{h}}+\hat{\alpha}^{-1}\Delta\mathbf{x}+\mathbf{h}\right)=\\=
    \frac{\Delta}{2}\left[-\mathbf{x}+\mathbb{E}_z\tanh\left(z\sqrt{\hat{\alpha}^{-1}\Delta\mathbf{x}+\mathbf{h}}+\hat{\alpha}^{-1}\Delta\mathbf{x}+\mathbf{h}\right)\right]
\end{multline}and it vanishes exactly when \eqref{consist_eq} holds.
\end{proof}

\begin{remark}
Notice that the positive definiteness of $\Delta$ is used to ensure the positivity of the remainder in \eqref{lower_bound_proof}, that ultimately leads to the lower bound. It is evident that if the sign of $\Delta$ is not definite the technique used does not produce any bound. In that case indeed, there is a direction along which the quadratic form in \eqref{p_variaz_MSKNL} can blow up to infinity. Thus, one should expect a $\min-\max$ principle as happens for bipartite systems, \emph{e.g.} the Wishart model \cite{wigner-wishart} and the bipartite SK in its replica symmetric phase \cite{Alberici2020AnnealingAR,noi_deep2,bgg,genovese}, that are a paradigm for non-elliptic interaction structure.
\end{remark}

\begin{proposition}\label{prop_eq_consist}Let $\Delta$ be strictly positive definite in \eqref{p_variaz_MSKNL}. Denote by $\rho(A)$ the spectral radius of a matrix $A$ and by $\mathscr{H}_\mathbf{x}\bar{p}$ the Hessian matrix of $\bar{p}$. The following implication holds:
\begin{align}\label{concavity_pvar_cond}
    \rho(\hat{\alpha}^{-1}\Delta)<1\quad\Rightarrow\quad
    \mathscr{H}_\mathbf{x}\bar{p}(\mu,h;\mathbf{x})<0,\;\forall\mathbf{x}\in\mathbb{R}^K_{\geq 0}
\end{align}or equivalently $\bar{p}(\mu,h;\mathbf{x})$ is strictly concave w.r.t. $\mathbf{x}$.
\end{proposition}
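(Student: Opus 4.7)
The approach is to compute the Hessian explicitly from the gradient formula \eqref{gradient_p_var} already derived in the proof of Theorem \ref{main_theorem}, and then dominate it in the Loewner order by a matrix whose operator norm is controlled by the spectral radius $\rho(\hat\alpha^{-1}\Delta)$. First, I would differentiate \eqref{gradient_p_var} once more in $\mathbf{x}$. Writing $Q_r(\mathbf{x}) := (\hat\alpha^{-1}\Delta\mathbf{x}+\mathbf{h})_r$ and using the identity $\mathbb{E}_z\tanh(z\sqrt{Q_r}+Q_r) = 2\psi'(Q_r) - 1$ from \eqref{properties_psi}, the Jacobian of the $\tanh$-vector appearing in \eqref{gradient_p_var} equals $C(\mathbf{x})\,\hat\alpha^{-1}\Delta$, where $C(\mathbf{x}) := \text{diag}(2\psi''(Q_r(\mathbf{x})))_{r=1,\dots,K}$. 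The chain rule then produces the symmetric expression
\begin{equation*}
\mathscr{H}_\mathbf{x}\bar p(\mu,h;\mathbf{x}) \;=\; -\frac{1}{2}\,\Delta \;+\; \frac{1}{2}\,\Delta\, C(\mathbf{x})\, \hat\alpha^{-1}\,\Delta\,,
\end{equation*}
whose symmetry follows from the fact that $C(\mathbf{x})\hat\alpha^{-1}$ is diagonal.

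The key analytic input is then the uniform bound $0 \leq c_r := 2\psi''(Q_r) \leq 1$ valid for every $\mathbf{x}\in\mathbb{R}_{\geq 0}^K$: it follows directly from the representation $c_r = \mathbb{E}[(1-\langle\sigma\rangle^2)^2]$ in \eqref{properties_psi} together with $|\langle\sigma\rangle|\leq 1$. Since $C(\mathbf{x})$ and $\hat\alpha^{-1}$ are positive diagonal and commute, this yields $C(\mathbf{x})\hat\alpha^{-1} \leq \hat\alpha^{-1}$ in the Loewner order. The strict positive definiteness of $\Delta$ allows me to introduce $\Delta^{1/2}$; setting $\mathbf{w} := \Delta^{1/2}\mathbf{v}$ for $\mathbf{v}\neq 0$ (so $\mathbf{w}\neq 0$ by invertibility of $\Delta^{1/2}$), conjugation gives
\begin{equation*}
2\,\mathbf{v}^T\mathscr{H}_\mathbf{x}\bar p\,\mathbf{v} \;=\; \mathbf{w}^T\!\bigl(\Delta^{1/2}C(\mathbf{x})\hat\alpha^{-1}\Delta^{1/2} - I\bigr)\mathbf{w} \;\leq\; \mathbf{w}^T\!\bigl(\Delta^{1/2}\hat\alpha^{-1}\Delta^{1/2} - I\bigr)\mathbf{w}\,.
\end{equation*}

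Finally, I would identify the spectrum of $\Delta^{1/2}\hat\alpha^{-1}\Delta^{1/2}$: this matrix is symmetric and positive definite, and it is similar to $\hat\alpha^{-1}\Delta$ via conjugation by $\Delta^{\pm 1/2}$, so the two share the same eigenvalues; its operator norm therefore equals $\rho(\hat\alpha^{-1}\Delta)$, which is strictly less than $1$ by assumption. Hence $\Delta^{1/2}\hat\alpha^{-1}\Delta^{1/2} \leq \rho(\hat\alpha^{-1}\Delta)\,I < I$, and inserting this in the previous display yields $\mathbf{v}^T\mathscr{H}_\mathbf{x}\bar p\,\mathbf{v} < 0$ for every $\mathbf{v}\neq 0$ and every $\mathbf{x}\in\mathbb{R}_{\geq 0}^K$, which is the desired strict concavity. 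The only mildly delicate point is that the bound $c_r \leq 1$ must hold uniformly in $\mathbf{x}$; this is precisely the content of \eqref{properties_psi}. Everything else is linear algebra made possible by the strict positivity of $\Delta$.
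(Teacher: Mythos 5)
Your proof is correct. The Hessian you compute coincides with the paper's: your $C(\mathbf{x})$ is exactly the paper's $\mathcal{D}(\mathbf{x},\mathbf{h})$, and both proofs factor out $\Delta^{1/2}$ to reduce the problem to showing that $\Delta^{1/2}\hat\alpha^{-1}\mathcal{D}(\mathbf{x},\mathbf{h})\Delta^{1/2}<\mathbbm{1}$. Where you part ways is in the final linear-algebra step. The paper rewrites the inner matrix as $\mathcal{D}^{1/2}\hat\alpha^{-1/2}\Delta\hat\alpha^{-1/2}\mathcal{D}^{1/2}$ using similarity, then invokes the identity of spectral radius and operator norm for symmetric matrices, then peels off $\mathcal{D}$ by submultiplicativity of the operator norm. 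You instead use a Loewner-order argument: $\mathcal{D}\hat\alpha^{-1}\leq\hat\alpha^{-1}$ as diagonal matrices with $\mathcal{D}\leq\mathbbm{1}$, conjugation by $\Delta^{1/2}$ preserves the Loewner order, and the dominating matrix $\Delta^{1/2}\hat\alpha^{-1}\Delta^{1/2}$ is similar to $\hat\alpha^{-1}\Delta$ and hence has spectral radius below $1$. Your route is slightly more economical, as it replaces the norm-submultiplicativity chain with a one-line monotonicity under conjugation; the paper's route is perhaps more robust in settings where one cannot write the perturbation as a Loewner-dominated diagonal factor. Both are complete and rely on the same two ingredients: the uniform bound $0\leq\mathcal{D}\leq\mathbbm{1}$ coming from \eqref{properties_psi}, and strict positive definiteness of $\Delta$ to justify $\Delta^{1/2}$ and its invertibility.
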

\begin{proof}
The Hessian matrix can be computed starting from the gradient \eqref{gradient_p_var} and using properties \eqref{properties_psi}:
\begin{multline}\label{hessian_p_var}
    \mathscr{H}_\mathbf{x}\bar{p}(\mu,h;\mathbf{x})=-\frac{\Delta}{2}+\frac{1}{2}\Delta\mathcal{D}(\mathbf{x},\mathbf{h})\hat{\alpha}^{-1}\Delta=\\=\frac{1}{2}\Delta^{1/2} \left[-\mathbbm{1}+\Delta^{1/2}\hat{\alpha}^{-1}\mathcal{D}(\mathbf{x},\mathbf{h})\Delta^{1/2}\right]\Delta^{1/2}
\end{multline}
\begin{align}
    \mathcal{D}(\mathbf{x},\mathbf{h}):=\text{diag}\left\{\mathbb{E}_z
    \left(1-\tanh^2\left(z\sqrt{\hat{\alpha}^{-1}\Delta\mathbf{x}+\mathbf{h}}+\hat{\alpha}^{-1}\Delta\mathbf{x}+\mathbf{h}\right)_r\right)^2
    \right\}_{r=1,\dots,K}.
\end{align}Since similar matrices have the same spectral radius we have:
\begin{multline}
    \rho\left(\Delta^{1/2}\hat{\alpha}^{-1}\mathcal{D}(\mathbf{x},\mathbf{h})\Delta^{1/2}\right)=\rho(\hat{\alpha}^{-1}\mathcal{D}(\mathbf{x},\mathbf{h})\Delta)=\\=\rho\left(
    \mathcal{D}^{1/2}(\mathbf{x},\mathbf{h})\hat{\alpha}^{-1/2}\Delta\hat{\alpha}^{-1/2}\mathcal{D}^{1/2}(\mathbf{x},\mathbf{h})
    \right)\,.
\end{multline}Now we use the fact that, for symmetric matrices, the spectral radius coincides with the matrix norm induced by the Euclidean norm:
\begin{multline}
    \rho\left(\Delta^{1/2}\hat{\alpha}^{-1}\mathcal{D}(\mathbf{x},\mathbf{h})\Delta^{1/2}\right)=\Vert\mathcal{D}^{1/2}(\mathbf{x},\mathbf{h})\hat{\alpha}^{-1/2}\Delta\hat{\alpha}^{-1/2}\mathcal{D}^{1/2}(\mathbf{x},\mathbf{h})\Vert\leq \\\leq\Vert\mathcal{D}(\mathbf{x},\mathbf{h})\Vert\Vert\hat{\alpha}^{-1/2}\Delta\hat{\alpha}^{-1/2}\Vert=\Vert\mathcal{D}(\mathbf{x},\mathbf{h})\Vert\rho\left(\hat{\alpha}^{-1/2}\Delta\hat{\alpha}^{-1/2}\right)\leq\\\leq \rho\left(\hat{\alpha}^{-1/2}\Delta\hat{\alpha}^{-1/2}\right)\,.
\end{multline}Finally, exploiting again matrix similarity:
\begin{align}
    \rho\left(\Delta^{1/2}\hat{\alpha}^{-1}\mathcal{D}(\mathbf{x},\mathbf{h})\Delta^{1/2}\right)\leq \rho\left(\hat{\alpha}^{-1}\Delta\right)<1
\end{align}by hypothesis. The previous one implies that:
\begin{align}
    -\mathbbm{1}+\Delta^{1/2}\hat{\alpha}^{-1}\mathcal{D}(\mathbf{x},\mathbf{h})\Delta^{1/2}<0
\end{align}whence, for any test vector $\mathbf{v}$:
\begin{multline}
    \left(\mathbf{v},\Delta^{1/2} \left[-\mathbbm{1}+\Delta^{1/2}\hat{\alpha}^{-1}\mathcal{D}(\mathbf{x},\mathbf{h})\Delta^{1/2}\right]\Delta^{1/2}\mathbf{v}\right)=\\=\left(\Delta^{1/2}\mathbf{v}, \left[-\mathbbm{1}+\Delta^{1/2}\hat{\alpha}^{-1}\mathcal{D}(\mathbf{x},\mathbf{h})\Delta^{1/2}\right](\Delta^{1/2}\mathbf{v})\right)<0\;.
\end{multline}
\end{proof}
\begin{remark}
The previous proposition implies that, whenever $\Delta$ is invertible, $\mathbf{h}=0$ and $\rho(\hat{\alpha}^{-1}\Delta)<1$, the point $\mathbf{x}=0$ is the unique maximizer of \eqref{p_variaz_MSKNL}. On the contrary, when $\rho(\hat{\alpha}^{-1}\Delta)>1$ we have
\begin{align*}
    \mathscr{H}_\mathbf{x}\bar{p}(\mu,0;0)=\frac{1}{2}\Delta^{1/2} \left[-\mathbbm{1}+\Delta^{1/2}\hat{\alpha}^{-1}\Delta^{1/2}\right]\Delta^{1/2}
\end{align*}
and the matrix in square brackets has at least one positive eigenvalue, therefore $\mathbf{x}=0$ becomes an unstable saddle point for the variational pressure, thus signalling a phase transition. Notice that this instability can be generated both varying the parameters $\Delta_{rs}$ and the form factors $\alpha_r$.
\end{remark}
\begin{remark}
If $\Delta$ is non singular, our variational pressure \eqref{p_variaz_MSKNL} goes to $-\infty$ as $\Vert\mathbf{x}\Vert\to\infty$, because the concave quadratic form always dominates the sum of the terms containing $\psi$, which is Lispchitz with $\text{Lip}(\psi)\leq 1$ (again by \eqref{properties_psi}). This, together with the regularity of $\bar{p}$ ensures that there is a global maximum satisfying the fixed point equation:
\begin{align}\label{fixed_point_eq_h00}
    \mathbf{x}=\mathbb{E}_z\tanh\left(z\sqrt{\hat{\alpha}^{-1}\Delta\mathbf{x}+\mathbf{h}}+\hat{\alpha}^{-1}\Delta\mathbf{x}+\mathbf{h}\right)=:\mathbf{T}(\mathbf{x};\mathbf{h})\;.
\end{align}
The Jacobian matrix of $\mathbf{T}(\cdot;\mathbf{h})$ is:
\begin{align}
    \text{D}\mathbf{T}(\mathbf{x};\mathbf{h})=\mathcal{D}(\mathbf{x},\mathbf{h})\hat{\alpha}^{-1}\Delta
\end{align}
and satisfies:
\begin{align}
    \rho(\text{D}\mathbf{T}(\mathbf{x};\mathbf{h}))=\rho(\mathcal{D}(\mathbf{x},\mathbf{h})\hat{\alpha}^{-1}\Delta)\leq \rho(\hat{\alpha}^{-1}\Delta)
\end{align}as proved in Proposition \ref{prop_eq_consist}. Equality holds at $\mathbf{h}=0$ and $\mathbf{x}=0$. Hence when $\rho(\hat{\alpha}^{-1}\Delta)<1$ the iteration of $\mathbf{T}(\cdot;\mathbf{h})$ converges to a fixed point. If this does not hold, we still have that at one local maximum point, say $\mathbf{x}^*$:
\begin{align}
    \mathscr{H}_\mathbf{x}\bar{p}(\mu,h;\mathbf{x}^*)<0\quad\text{or}\quad\rho(\Delta^{1/2}\hat{\alpha}^{-1}\mathcal{D}(\mathbf{x}^*,\mathbf{h})\Delta^{1/2})=\rho(\text{D}\mathbf{T}(\mathbf{x}^*;\mathbf{h}))<1\;.
\end{align}The latter implies that the iteration $\mathbf{x}_{n+1}=\mathbf{T}(\mathbf{x}_n;\mathbf{h})$ converges to $\mathbf{x}^*$ (locally) provided that $\Vert\mathbf{x}_0-\mathbf{x}^*\Vert<\delta$ with $\delta$ sufficiently small. 
\end{remark}
\begin{remark}
Our parameters lie in $\mathbb{R}_{\geq0}^K$, thus the vanishing gradient condition \emph{a priori} allows us only to find maximizers of \eqref{p_variaz_MSKNL} in the interior, namely when $x_r>0\,\forall\,r=1,\dots,K$. More rigorously, the necessary conditions for a point $\bar{\mathbf{x}}\in\mathbb{R}_{\geq0}^K$ to be a maximizer are:
\begin{align}
    \begin{cases}
    \partial_{x_r}\bar{p}(\mu,h;\bar{\mathbf{x}})=\frac{1}{2}\left[
    \Delta(-\bar{\mathbf{x}}+\mathbf{T}(\bar{\mathbf{x}};\mathbf{h}))
    \right]_r\leq 0\\
    \bar{x}_r\partial_{x_r}\bar{p}(\mu,h;\bar{\mathbf{x}})=\frac{1}{2}x_r\left[
    \Delta(-\bar{\mathbf{x}}+\mathbf{T}(\bar{\mathbf{x}};\mathbf{h}))
    \right]_r= 0
    \end{cases}
\end{align}If we notice that $T_r(\mathbf{x};\mathbf{h})\geq 0$ these conditions imply:
\begin{multline}
    \begin{cases}
    \left(\mathbf{T}(\bar{\mathbf{x}};\mathbf{h}),\Delta(-\bar{\mathbf{x}}+\mathbf{T}(\bar{\mathbf{x}};\mathbf{h}))\right)\leq0\\
    \left(\bar{\mathbf{x}},\Delta(-\bar{\mathbf{x}}+\mathbf{T}(\bar{\mathbf{x}};\mathbf{h}))\right)=0
    \end{cases}\quad\Rightarrow\\\Rightarrow\quad
    \left(-\bar{\mathbf{x}}+\mathbf{T}(\bar{\mathbf{x}};\mathbf{h}),\Delta(-\bar{\mathbf{x}}+\mathbf{T}(\bar{\mathbf{x}};\mathbf{h}))\right)\leq 0\;.
\end{multline}However, since $\Delta>0$ we must necessarily have:
\begin{align}
    \left(-\bar{\mathbf{x}}+\mathbf{T}(\bar{\mathbf{x}};\mathbf{h}),\Delta(-\bar{\mathbf{x}}+\mathbf{T}(\bar{\mathbf{x}};\mathbf{h}))\right)=0\quad\Leftrightarrow\quad-\bar{\mathbf{x}}+\mathbf{T}(\bar{\mathbf{x}};\mathbf{h})=0\;.
\end{align}From the previous we can see that the consistency equation \eqref{fixed_point_eq_h00} is necessarily satisfied also by maximizers on the boundary. 
\end{remark}

\section{Perspectives and conclusions}

It is interesting to emphasize the link of our model with an inference problem since, as we have seen, among the techniques we use 
there are some whose origins are within high dimensional statistical inference. This fact goes beyond a bare technical analogy. Indeed our model admits itself an inferential interpretation. To start with, it is well known \cite{adaptive} that  our case $K=1$, \emph{i.e.} SK on the Nishimori line which is also called planted SK model, is equivalent, thanks to the $\mathbb{Z}_2$ gauge symmetry, to the Wigner spiked model  with Rademacher prior $\rho= 1/2(\delta_{1}+\delta_{-1})$. 
For generic $K$ the corresponding inference problem is defined as follows. Given a family of non negative numbers $(\mu_{rs})_{r,s=1,\dots,K}$, consider a Gaussian channel
\begin{align}
    y_{ij}(\mu_{rs})=\sqrt{\frac{\mu_{rs}}{2N}}\sigma^*_i\sigma^*_j+z_{ij}\,,\quad z_{ij}\iid\mathcal{N}(0,1)\,,\,(i,j)\in\Lambda_r\times\Lambda_s
\end{align}
where $\sigma^*\in \{\pm1\}^N$ (the ground truth) is the signal we want to recover through the observations $y_{ij}$.  Here $\mu_{rs}$ play the role of an index dependent signal-to-noise ratio. 
Up to costants, the Gibbs measure associated to our Hamiltonian \eqref{H_MSK_NL} corresponds
to the posterior distribution in the Bayesian optimal setting
and the pressure corresponds to the mutual information. This correspondence can be obtained along the same lines of the case $K=1$ and we refer to \cite{adaptive} for the details.

The  model that we take into account was studied under some 
specific assumptions on the $\mu_{rs}$, listed in Paragraph 2.3 in \cite{Jean-lenka2} (see also \cite{Lenka}). 
The thermodynamic properties the authors focus on are obtained by first considering the infinite volume limit of each block and then sending the number of blocks to infinity thus recovering the limiting mutual information of the Wigner spiked model, i.e. the case with homogeneous ($\mu_{rs}=\bar{\mu}$) signal-to-noise ratio. In the present work instead, in the case of a Rademacher planted signal and the only positive definiteness assumption on the matrix $\mu$, the model is studied and solved for arbitrary number of species and form factors, through a replica symmetry result and a finite dimensional variational principle for the model pressure in the infinite volume limit. 
The positivity assumption on $\mu$ rules out some interesting non-elliptic structures such as restricted Boltzmann machines. However, we will show in a follow up work how to deal with these non convexities also proving a replica symmetric variational formula for the pressure of the Deep Boltzmann Machine on the Nishimori line \cite{DBMNL}.\\  
\vspace{10pt}

\textbf{Acknowledgements} 
The authors thank Jean Barbier, Adriano Barra, Francesco Guerra and Nicolas Macris for interesting discussions. In particular we thank Jean Barbier for pointing out the inferential interpretation of our model. D.A. and E.M. acknowledge support from Progetto Alma Idea 2018, Università di Bologna.

\appendix
\section{Appendix: Proof of the Concentration Lemma}\label{Appendix_A}
\numberwithin{equation}{section}
\setcounter{equation}{0}

\begin{proof}[Proof of Lemma \ref{concentration_lemma}]
Let us split the proof into three steps for the sake of clarity. As anticipated, it is convenient to split the total fluctuation of $\mathcal{L}_r$ into two parts, thus proving that:
\begin{align}
    \label{Concentration_2}
    &\mathbb{E}_{\boldsymbol{\epsilon}}\mathbb{E}
    \Big\langle\left(\mathcal{L}_r-\langle\mathcal{L}_r\rangle_{N,t}^{(\epsilon)}\right)^2\Big\rangle_{N,t}^{(\epsilon)}\longrightarrow0\quad\text{as }N\to\infty\\
    \label{Concentration_3}
    &\mathbb{E}_{\boldsymbol{\epsilon}}\mathbb{E}
    \left(\langle\mathcal{L}_r\rangle_{N,t}^{(\epsilon)}-\mathbb{E}\langle\mathcal{L}_r\rangle_{N,t}^{(\epsilon)}\right)^2\longrightarrow0\quad\text{as }N\to\infty\;.
\end{align}
From this moment on, we neglect sub and superscripts in the Gibbs brackets as well as $t$-dependencies. We start by proving the inequality \eqref{Concentration_1}.

\paragraph{Proof of inequality \eqref{Concentration_1}:} To begin with, we compute:
\begin{align}
    &\mathbb{E}[\langle\mathcal{L}_r\rangle]=\frac{1}{N_r}\sum_{i\in\Lambda_r}\mathbb{E}\Big\langle
    \sigma_i+\frac{J_i^r\sigma_i}{2\sqrt{Q_{\epsilon, r}}}
    \Big\rangle=\mathbb{E}\langle m_r\rangle+\frac{1}{2}\mathbb{E}[1-\langle m_r\rangle]=\frac{1}{2}\mathbb{E}[1+\langle m_r\rangle]
\end{align}where integration by parts has been used.

Then, we proceed with:
\begin{multline}
    \mathbb{E}\langle\mathcal{L}_r^2\rangle=\frac{1}{N_r^2}\sum_{i,j\in\Lambda_r}\mathbb{E}\Big\langle
    \sigma_i\sigma_j+\frac{J_i^r\sigma_i\sigma_j}{\sqrt{Q_{\epsilon,r}}}+\frac{J_i^rJ_j^r\sigma_i\sigma_j}{4Q_{\epsilon,r}}
    \Big\rangle=\underbrace{\mathbb{E}\langle m_r^2\rangle}_{R_1}+\underbrace{ \frac{1}{N_r^2}\sum_{i,j\in\Lambda_r}\mathbb{E}\Big\langle\frac{J_i^r\sigma_i\sigma_j}{\sqrt{Q_{\epsilon,r}}}\Big\rangle}_{R_2}+\\+
    \underbrace{ \frac{1}{N_r^2}\sum_{i,j\in\Lambda_r}\mathbb{E}\Big\langle\frac{J_i^rJ_j^r\sigma_i\sigma_j}{4Q_{\epsilon,r}}\Big\rangle}_{R_3}\,.
\end{multline}We treat the three terms $R_1$, $R_2$ and $R_3$ separately with repeated integrations by parts.
\begin{align}
    R_2=\frac{1}{N_r^2}\sum_{i,j\in\Lambda_r}\mathbb{E}
    \left[
    \langle \sigma_j\rangle-\langle\sigma_i\sigma_j\rangle\langle\sigma_i\rangle
    \right]=\mathbb{E}\langle m_r\rangle-\mathbb{E}\langle m_r\rangle^2
\end{align}where we have used the Nishimori identity \eqref{N_identity_3}.
\begin{multline}
    R_3=\frac{1}{4N_r^2}\sum_{i,j\in\Lambda_r}\mathbb{E}\left[ \frac{\delta_{ij}\overbrace{\langle\sigma_i\sigma_j\rangle}^{=1}}{Q_{\epsilon,r}} +J_j^r\frac{ (\langle\sigma_j\rangle-\langle\sigma_i\rangle\langle\sigma_i\sigma_j\rangle)}{\sqrt{Q_{\epsilon,r}}}\right]=
    \frac{1}{4N_rQ_{\epsilon,r}}+\\+\frac{1}{4N_r^2}\sum_{i,j\in\Lambda_r}\mathbb{E}\left[ 1-\langle\sigma_j\rangle^2-\langle\sigma_i\rangle(\langle\sigma_i\rangle-\langle\sigma_j\rangle\langle\sigma_i\sigma_j\rangle)-\langle\sigma_i\sigma_j\rangle(\langle\sigma_i\sigma_j\rangle-\langle\sigma_i\rangle\langle\sigma_j\rangle)
    \right]=\\=
    \frac{1}{4N_rQ_{\epsilon,r}}+\frac{1}{4}-\frac{1}{2}\mathbb{E}\langle m_r\rangle+\frac{1}{2}\mathbb{E}\langle m_r\rangle^2-\frac{1}{4}\mathbb{E}\langle m_r^2\rangle\,.
\end{multline}
Hence:
\begin{align}
    R_1+R_2+R_3=\frac{1}{4}+\frac{1}{4N_rQ_{\epsilon,r}}+\frac{3}{4}\mathbb{E}\langle m_r^2\rangle+\frac{1}{2}\mathbb{E}\langle m_r\rangle-\frac{1}{2}\mathbb{E}\langle m_r\rangle^2\,.
\end{align}Summing up all the contributions:
\begin{multline}
    \mathbb{E}\langle\mathcal{L}_r^2\rangle-(\mathbb{E}\langle\mathcal{L}_r\rangle)^2=\frac{1}{4N_rQ_{\epsilon,r}}+\frac{3}{4}\mathbb{E}\langle m_r^2\rangle-\frac{1}{2}\mathbb{E}\langle m_r\rangle^2-\frac{1}{4}(\mathbb{E}\langle m_r\rangle)^2=\\=\frac{1}{4N_rQ_{\epsilon,r}}+\frac{1}{4}(\mathbb{E}\langle m_r^2\rangle-(\mathbb{E}\langle m_r\rangle)^2)+\frac{1}{2}(\mathbb{E}\langle m_r^2\rangle-\mathbb{E}\langle m_r\rangle^2)\geq\frac{1}{4}\mathbb{E}\Big\langle(m_r-\mathbb{E}\langle m_r\rangle)^2\Big\rangle\,.
\end{multline}

\paragraph{Proof of \eqref{Concentration_2}:} Notice that:
\begin{align}
    &\frac{\partial \bar{p}_{N,\epsilon}}{\partial Q_{\epsilon,r}}=\frac{1}{N}\mathbb{E}\Big\langle
    \sum_{i\in\Lambda_r}\left(
    \sigma_i+\frac{J_i^r\sigma_i}{2\sqrt{Q_{\epsilon,r}}}
    \right)
    \Big\rangle=\alpha_r\mathbb{E}\langle\mathcal{L}_r\rangle=\frac{\alpha_r}{2}\mathbb{E}[1+\langle m_r\rangle],\;\quad J_i^r\iid\mathcal{N}(0,1)\\
    &\frac{\partial^2 \bar{p}_{N,\epsilon}}{\partial Q_{\epsilon,r}^2}=\alpha_r N_r\mathbb{E}\langle(\mathcal{L}_r-\langle\mathcal{L}_r\rangle)^2\rangle-\frac{1}{4NQ_{\epsilon,r}^{3/2}}\sum_{i\in\Lambda_r}\mathbb{E}\langle J_i^r\sigma_i\rangle\,.
\end{align}
From the last one, after an integration by parts and using the regularity of the map $\epsilon\longmapsto\mathbf{Q}_\epsilon(\cdot)$ and Lemma \ref{Liouville_formula} we get:
\begin{multline}
    \mathbb{E}_{\boldsymbol{\epsilon}}\mathbb{E}\langle(\mathcal{L}_r-\langle\mathcal{L}_r\rangle)^2\rangle\leq \frac{1}{N_r\alpha_rs_N^K}\prod_{s=1}^K\int_{Q_{s_N,s}}^{Q_{2s_N,s}}dQ_{\epsilon,s}\frac{\partial^2 \bar{p}_{N,\epsilon}}{\partial Q_{\epsilon,r}^2}+\mathbb{E}_{\boldsymbol{\epsilon}}\frac{1}{4N_r\epsilon_r}\mathbb{E}[1-\langle m_r\rangle]\leq\\\leq
    \frac{1}{N_r\alpha_rs_N^K}\prod_{s\neq r,1}^K\int_{Q_{s_N,s}}^{Q_{2s_N,s}}dQ_{\epsilon,s}\left[\left.\frac{\partial \bar{p}_{N,\epsilon}}{\partial Q_{\epsilon,r}}\right|_{Q_{2s_N,r}}-\left.\frac{\partial \bar{p}_{N,\epsilon}}{\partial Q_{\epsilon,r}}\right|_{Q_{s_N,r}}\right]+\frac{\log2}{4N_rs_N}\leq\\\leq
    \frac{2K_r(\Delta)}{N_rs_N^K}+\frac{\log2}{4N_rs_N}=\mathcal{O}\left(
    \frac{1}{N_rs_N^K}
    \right)\longrightarrow0
\end{multline}where:
\begin{align}
    \prod_{s\neq r,1}^K(Q_{2s_N,s}-Q_{s_N,s})\leq K_r(\Delta)\,.
\end{align}

\paragraph{Proof of \eqref{Concentration_3}:}
Let $p_{N,\epsilon}$ be the random interpolating pressure, such that $\mathbb{E}p_{N,\epsilon}=\bar{p}_{N,\epsilon}$. Define:
\begin{align}
    &\hat{p}_{N,\epsilon}=p_{N,\epsilon}-\alpha_r\sqrt{Q_{\epsilon,r}}\sum_{i\in\Lambda_r}\frac{|J_i^r|}{N_r},\;\quad\hat{\bar{p}}_{N,\epsilon}=\mathbb{E}\hat{p}_{N,\epsilon}\\
    &\frac{\partial^2 \hat{p}_{N,\epsilon}}{\partial Q_{\epsilon,r}^2}=\alpha_r N_r\langle(\mathcal{L}_r-\langle\mathcal{L}_r\rangle)^2\rangle+\frac{\alpha_r}{4Q_{\epsilon,r}^{3/2}}\sum_{i\in\Lambda_r}\frac{|J_i^r|- J_i^r\langle\sigma_i\rangle}{N_r} \geq 0\,.
\end{align}
Let us evaluate:
\begin{align}\label{previous_1}
    \left|
    \frac{\partial \hat{p}_{N,\epsilon}}{\partial Q_{\epsilon,r}}-\frac{\partial \hat{\bar{p}}_{N,\epsilon}}{\partial Q_{\epsilon,r}}
    \right|\geq \alpha_r\left|\langle\mathcal{L}_r\rangle-\mathbb{E}\langle\mathcal{L}_r\rangle\right|-\frac{\alpha_r|A_r|}{2\sqrt{Q_{\epsilon,r}}}
\end{align}where:
\begin{align}\label{previous_2}
    A_r:=\frac{1}{N_r}\sum_{i\in\Lambda_r}[|J_i^r|-\mathbb{E}|J_i^r|]\,.
\end{align}
Thanks to the independence of the $J_i^r$ it is immediate to verify  that $\exists\, a\geq 0$ s.t.:
\begin{align}
    \mathbb{E}[A_r^2]\leq \frac{a}{N_r}\,.
\end{align}
Using Lemma 3.2 in \cite{panchenko2015sherrington}, with notations used in \cite{adaptive}:
\begin{multline}\label{previous_3}
    \left|
    \frac{\partial \hat{p}_{N,\epsilon}}{\partial Q_{\epsilon,r}}-\frac{\partial \hat{\bar{p}}_{N,\epsilon}}{\partial Q_{\epsilon,r}}
    \right|\leq\frac{1}{\delta}\sum_{u\in\{Q_{\epsilon,r}+\delta,Q_{\epsilon,r},Q_{\epsilon,r}-\delta\}}[|\hat{p}_{N,\epsilon}-\hat{\bar{p}}_{N,\epsilon}|+\alpha_r\sqrt{u}|A_r|]+\\+C_\delta^+(Q_{\epsilon,r})+C_\delta^-(Q_{\epsilon,r})
\end{multline}with:
\begin{align}
    &C_\delta^\pm(Q_{\epsilon,r})=|\hat{\bar{p}}'_{N,\epsilon}(Q_{\epsilon,r}\pm\delta)-\hat{\bar{p}}'_{N,\epsilon}(Q_{\epsilon,r})|\\
    &|\hat{\bar{p}}'_{N,\epsilon}|=\left|\frac{\alpha_r}{2}\mathbb{E}[1+\langle m_r\rangle]-\frac{\alpha_r\mathbb{E}|J_1^r|}{2\sqrt{Q_{\epsilon,r}}}\right|\leq \alpha_r\left(1+\frac{C}{2\sqrt{s_N}}\right)\\
    &C_\delta^\pm(Q_{\epsilon,r})\leq \alpha_r\left(2+\frac{C}{\sqrt{s_N}}\right)
\end{align}
where for simplicity we have kept the dependence on $Q_{\epsilon,r}$ only, $\hat{\bar{p}}'_{N,\epsilon}$ is the derivative w.r.t. it and $\delta>0$. Notice that $\delta$ will be chosen strictly smaller than $s_N$, so that $Q_{\epsilon,r}-\delta\geq \epsilon-\delta\geq s_N-\delta>0$.  

Then, using \eqref{previous_1}, \eqref{previous_2} and \eqref{previous_3}, and thanks to the fact that $(\sum_{i=1}^p\nu_i)^2\leq p\sum_{i=1}^p\nu_i^2$, we get:
\begin{multline}\label{semi_final}
    \frac{\alpha^2_r}{9}\left|\langle\mathcal{L}_r\rangle-\mathbb{E}\langle\mathcal{L}_r\rangle\right|^2\leq \frac{1}{\delta^2}\sum_{u\in\{Q_{\epsilon,r}+\delta,Q_{\epsilon,r},Q_{\epsilon,r}-\delta\}}[|\hat{p}_{N,\epsilon}-\hat{\bar{p}}_{N,\epsilon}|^2+\alpha_r^2u|A_r|^2]+\\+C_\delta^+(Q_{\epsilon,r})^2+C_\delta^-(Q_{\epsilon,r})^2+\frac{\alpha_r^2A_r^2}{4\epsilon_r}\,.
\end{multline}
We first evaluate the two terms containing $C_\delta^\pm$:
\begin{multline}
    \mathbb{E}_{\boldsymbol{\epsilon}}[C_\delta^+(Q_{\epsilon,r})^2+C_\delta^-(Q_{\epsilon,r})^2]\leq
    2\alpha_r\left(
    2+\frac{C}{\sqrt{s_N}}
    \right)
    \mathbb{E}_{\boldsymbol{\epsilon}}[C_\delta^+(Q_{\epsilon,r})+C_\delta^-(Q_{\epsilon,r})]\leq \\
    \leq
    \frac{2\alpha_r}{s_N^K}\left(
    2+\frac{C}{\sqrt{s_N}}
    \right)\prod_{s=1}^K\int_{Q_{s_N,s}}^{Q_{2s_N,s}}dQ_{\epsilon,s}[\hat{\bar{p}}'_{N,\epsilon}(Q_{\epsilon,r}+\delta)-\hat{\bar{p}}'_{N,\epsilon}(Q_{\epsilon,r}-\delta)]=\\=
     \frac{2\alpha_r}{s_N^K}\left(
    2+\frac{C}{\sqrt{s_N}}
    \right)\prod_{s\neq r,1}^K \int_{Q_{s_N,s}}^{Q_{2s_N,s}}dQ_{\epsilon,s}[\hat{\bar{p}}_{N,\epsilon}(Q_{2s_N,r}+\delta)-\hat{\bar{p}}_{N,\epsilon}(Q_{2s_N,r}-\delta)+\\-\hat{\bar{p}}_{N,\epsilon}(Q_{s_N,r}+\delta)+\hat{\bar{p}}_{N,\epsilon}(Q_{s_N,r}-\delta)]\leq
    \frac{8\alpha_r^2 K_r(\Delta)}{s_N^K}\delta\left(
    2+\frac{C}{\sqrt{s_N}}
    \right)^2\,.
\end{multline}

Taking the expectation $\mathbb{E}_{\boldsymbol{\epsilon}}\mathbb{E}$ in \eqref{semi_final}, and defining $W_r$ s.t. $Q_{\epsilon,r}\leq W_r$, we get:
\begin{multline}\label{conc}
    \frac{\alpha^2_r}{9}\mathbb{E}_{\boldsymbol{\epsilon}}\mathbb{E}\left|\langle\mathcal{L}_r\rangle-\mathbb{E}\langle\mathcal{L}_r\rangle\right|^2\leq \frac{3}{\delta^2}\left[\frac{S}{N}+\frac{\alpha_r^2W_ra}{N_r}\right]+\\+\frac{8\alpha_r^2 K_r(\Delta)}{s_N^K}\delta\left(
    2+\frac{C}{\sqrt{s_N}}
    \right)^{2}+\frac{\alpha_r^2a\log2}{4N_rs_N}\,.
\end{multline}
We can make the r.h.s. vanish by choosing for example: $\delta=s_N^{2K/3}N^{-1/3}$. The choice $s_N\propto N^{-1/16K}$ makes the r.h.s. \eqref{conc} behave like $\mathcal{O}(N^{-1/4})$.
\end{proof}

\section{Appendix: the SK case}\label{Appendix_B}
In the case $K=1$ the equation \eqref{solution_of_the_model} reduces to:
\begin{align}
    \label{monospecies_LTD}
    \lim_{N\to\infty}\bar{p}_N(\mu,h)=\sup_{x\in\mathbb{R}_{\geq 0}}\left\{
    \mu\frac{(1-x)^2}{4}-\frac{\mu x^2}{2}+\psi(\mu x+h)
    \right\}
\end{align}while \eqref{consist_eq} simply becomes:
\begin{align}\label{consist_eq_mono}
    x=\mathbb{E}_z\tanh\left(z\sqrt{\mu x+h}+\mu x+h\right):=T(x;\mu,h)\;.
\end{align}

We collect the main results on this model in the following proposition.
\begin{proposition}
Define:
\begin{align}
    \bar{p}_{var}(x;\mu,h)=\mu\frac{(1-x)^2}{4}-\frac{\mu x^2}{2}+\psi(\mu x+h)\;.
\end{align}The following hold:
\begin{enumerate}
    \item if $\mu<1$ then $\bar{p}_{var}$ is concave in x. Equivalently if $\mu<1$ then $T(x;\mu,h)$ is a contraction, and if further $h=0$ then $x=0$ is its fixed point;
    \item the stable solution of the consistency equation \eqref{consist_eq_mono} is continuous at $(\mu,h)=(1,0)$:
    \begin{align}
        \lim_{(\mu,h)\to (1,0)}\bar{x}(\mu,h)=0=\bar{x}(1,0)\;;
    \end{align}
    \item for fixed $h=0$, the magnetization goes to $0$ linearly with $\mu-1$ as $\mu\to1_+$, more precisely:
    \begin{align}
        \bar{x}=(1+o(1))\frac{\mu-1}{\mu^2}
    \end{align}where $o(1)$ goes to $0$  when $\mu\to1_+$. Therefore the critical exponent $\beta$ (in the Landau classification) is $1$, which means that the derivative of the magnetization w.r.t. $\mu$ does not diverge at the critical point, it only jumps from $0$ to $1$ and then decreases;
    \item Along the line $(\mu,\lambda(\mu-1)),\;\lambda>0$ in the plane $(\mu,h)$ the magnetization goes to $0$ as follows:
    \begin{align}
        \bar{x}=\sqrt{\frac{\lambda(\mu-1)}{\mu^2}}(1+o(1))\;
    \end{align}when $\mu\to1_+$, therefore with a critical exponent $1/2$;
    \item For fixed $\mu=1$ and $h\to 0_+$ the magnetization behaves as:
    \begin{align}
        \bar{x}^2=h(1+o(1))
    \end{align}
    where $o(1)\to 0$ when $h\to0_+$. Therefore we have a critical exponent $\delta=2$ (according to Landau's classification). 
\end{enumerate}
\end{proposition}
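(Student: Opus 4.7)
The plan is to bootstrap all five claims from two ingredients: the $K=1$ specialization of Proposition \ref{prop_eq_consist}, and a Taylor expansion of the scalar magnetization $m(Q):=\mathbb E_z\tanh(z\sqrt Q+Q)$ near $Q=0$.

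Claim 1 is immediate from Proposition \ref{prop_eq_consist} with $\hat\alpha=1$ and $\Delta=\mu$: the hypothesis $\rho(\hat\alpha^{-1}\Delta)=\mu<1$ gives strict concavity of $\bar p_{var}$. Since $T(\,\cdot\,;\mu,h)=m(\mu\,\cdot\,+h)$ and $0\le m'(Q)=\mathbb E_z(1-\tanh^2(z\sqrt Q+Q))^2\le 1$ (strictly less than $1$ for $Q>0$), the chain rule gives $T'(x;\mu,h)\le\mu<1$, so $T$ is a contraction on $\mathbb R_{\ge 0}$, with $T(0;\mu,0)=0$ as its fixed point when $h=0$. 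For Claim 2 I would use a compactness argument: any maximizer $\bar x(\mu_n,h_n)$ lies in a bounded set by coercivity of $\bar p_{var}$ (the $-\mu x^2/4$ dominates the sublinear $\psi$), so a subsequence converges to some $x^*\ge 0$; joint continuity of $\bar p_{var}$ in $(\mu,h,x)$ then forces $x^*$ to maximize $\bar p_{var}(\,\cdot\,;1,0)$. Since $m(x)<x$ for every $x>0$ (from $m'<1$ strictly on $(0,\infty)$ and $m(0)=0$), the derivative $\bar p_{var}'(x)=\tfrac12(m(x)-x)$ is strictly negative on $(0,\infty)$, so $x=0$ is the unique maximizer at $(1,0)$ and $x^*=0$.

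For Claims 3--5 the crucial input is
\[
m(Q)=Q-Q^2+O(Q^3)\qquad (Q\to 0^+),
\]
which I would obtain by computing Gaussian moments of $w=z\sqrt Q+Q\sim\mathcal N(Q,Q)$ (so $\mathbb E[w]=Q$ and $\mathbb E[w^3]=3Q^2+Q^3$), combined with Taylor's theorem for $\tanh$ on $\{|w|\le 1\}$ plus the exponentially small Gaussian probability of $\{|w|>1\}$ to control the remainder. Substituting $Q=\mu x+h$ into the consistency equation $x=m(\mu x+h)$ yields the master asymptotic identity
\[
(\mu-1)\,x+h=(\mu x+h)^2+O((\mu x+h)^3).
\]
Claims 3, 4, 5 then follow by matching scales on this identity: for Claim 3 ($h=0$, $\mu\to 1_+$) balance $(\mu-1)x\sim\mu^2 x^2$ to get $\bar x=(1+o(1))(\mu-1)/\mu^2$; for Claim 4 ($h=\lambda(\mu-1)$), under the ansatz $x\sim c\sqrt{\mu-1}$ the linear term $(\mu-1)x$ is of order $(\mu-1)^{3/2}$ and hence subleading compared to $h=\lambda(\mu-1)$, so the balance becomes $\lambda(\mu-1)\sim\mu^2 x^2$, giving $\bar x\sim\sqrt{\lambda(\mu-1)/\mu^2}$; for Claim 5 ($\mu=1$, $h\to 0_+$) the identity reduces directly to $h=x^2(1+o(1))$.

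The main subtlety is identifying $\bar x(\mu,h)$ with the nontrivial branch of the consistency equation. For $h>0$ one has $T(0;\mu,h)=m(h)>0$, so $x=0$ is not even a fixed point and existence of $\bar x>0$ is automatic. For $h=0$ and $\mu>1$ both $x=0$ and a positive $\bar x$ solve the consistency equation, but $x=0$ is an unstable critical point of $\bar p_{var}$ since $T'(0;\mu,0)=\mu>1$, so it is not the global maximizer; the positive branch, isolated by monotonically iterating $T$ from $x_0=1$ down to its largest fixed point (using monotonicity of $T$ and $T(1)\le 1$), is the one selected by the $\sup$-principle in Theorem \ref{main_theorem}. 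Once the correct branch is identified, in each of the three regimes $\mu x+h\to 0$, so the $O((\mu x+h)^3)$ error in the master identity is genuinely subleading and the asserted asymptotics follow by a routine a posteriori bootstrap.
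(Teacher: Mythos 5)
Your argument is correct, and on two points it takes a route genuinely different from the paper's. For item~2, the paper fixes a point sequence via $\limsup$/$\liminf$, applies $T$ and uses its continuity and monotonicity to show both must satisfy the $(\mu,h)=(1,0)$ consistency equation, whose unique solution is $0$ because $T'(\,\cdot\,;1,0)\le 1$ with equality only at the origin; you instead run a compactness--coercivity argument (uniform boundedness of maximizers, subsequence convergence, joint continuity of $\bar p_{var}$) and then kill the limit by the strict sign of $\bar p_{var}'$ on $(0,\infty)$. Both are valid; yours is perhaps more robust but less tied to the Nishimori structure. For items~3--5, the paper deliberately avoids Taylor-expanding $\tanh$ under the Gaussian: it uses the single-spin Nishimori identity $\mathbb{E}\tanh^2(z\sqrt Q+Q)=\mathbb{E}\tanh(z\sqrt Q+Q)$ (a consequence of \eqref{N_identity_1} and \eqref{properties_psi}) to read off $T'(0;\mu,0)=\mu$ and $T''(0;\mu,0)=-2\mu^2$ in one line, then matches scales exactly as you do. Your route via Gaussian moments of $w=z\sqrt Q+Q$ and a truncated $\tanh$ series reaches the same local expansion $m(Q)=Q-Q^2+o(Q^2)$; it is more elementary and self-contained, at the cost of the remainder bookkeeping you sketch. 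One small inaccuracy worth flagging: since $\mathbb{E}|w|^5=O(Q^{5/2})$ while $\mathbb{E}[w^5]=O(Q^3)$, the honest error from truncating the odd Taylor series is $O(Q^{5/2})$, not $O(Q^3)$; this is still $o(Q^2)$ and hence harmless for items~3--5, but the stated $O(Q^3)$ is slightly too strong. Your discussion of branch selection (instability of $x=0$ when $\mu>1$, monotone iteration from $x_0=1$) is a useful addition that the paper leaves implicit.
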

\begin{proof}
\emph{1.}  The fist assertion follows immediately from \eqref{concavity_pvar_cond}, since $\hat{\alpha}\equiv 1$ and $\Delta\equiv\mu$. Then, by \eqref{properties_psi}:
\begin{align*}
    \frac{dT}{dx}(x;\mu,h)=\mu\mathbb{E}_z\left[\left(1-\tanh^2\left(z\sqrt{\mu x+h}+\mu x+h\right)\right)^2\right]\leq \mu<1
\end{align*}that implies $T$ is a contraction. It is easy to see that if $h=0$ then $x=0$ is a solution of the fixed point equation which must be unique by Banach's fixed point theorem.

\emph{2.} Using continuity and monotonicity of T (see \eqref{properties_psi}):
\begin{align*}
    &\limsup_{(\mu,h)\to (1,0)}\bar{x}(\mu,h)=T(\limsup_{(\mu,h)\to (1,0)}\bar{x}(\mu,h);1,0)\\
    &\liminf_{(\mu,h)\to (1,0)}\bar{x}(\mu,h)=T(\liminf_{(\mu,h)\to (1,0)}\bar{x}(\mu,h);1,0)
\end{align*}
hence both $\limsup_{(\mu,h)\to (1,0)}\bar{x}(\mu,h)$ and $\liminf_{(\mu,h)\to (1,0)}\bar{x}(\mu,h)$ satisfy the consistency equation:
\begin{align*}
    m=\mathbb{E}_z\tanh(z\sqrt{m}+m)
\end{align*}whose solution $m=0$ is unique, since the derivative of $T(m;1,0)$ is $\leq1$ and equality holds only at $m=0$. We conclude that there exists
\begin{align}
    \lim_{(\mu,h)\to (1,0)}\bar{x}(\mu,h)=0=\bar{x}(1,0)\;.
\end{align}

\emph{3.}
We first notice that 
\begin{align}\label{N_identity_gas}
    \mathbb{E}\tanh^2\left(z\sqrt{Q}+Q\right)=\mathbb{E}\tanh\left(z\sqrt{Q}+Q\right)\,,\quad Q\geq0\,,z\sim\mathcal{N}(0,1) 
\end{align} which simply follows from the third relation in \eqref{properties_psi} and the identity \eqref{N_identity_1}. Indeed, the quantity in \eqref{N_identity_gas} is nothing but the quenched average magnetization of a free system.
Now, by computing the first and second derivatives of the map $T(x;\mu,0)$ and using \eqref{N_identity_gas} we get:
\begin{align*}
    &T'(0;\mu,0)=\mu\,,\quad T''(0;\mu,0)=-2\mu^2\\
    &\bar{x}=\mu \bar{x}-\mu^2 \bar{x}^2(1+o(1))\quad\Rightarrow\quad
    \bar{x}=(1+o(1))\left(\frac{\mu-1}{\mu^2}\right)\;,
\end{align*}which implies that, in proximity of $\mu=1$, the magnetization goes to $0$ with a critical exponent $\beta=1$ (not to be confused with inverse absolute temperature) and with slope $1$.

\emph{4.} An analogous expansion of $T$ yields:
\begin{align*}
    \bar{x}=T(\bar{x};\mu,\lambda(\mu-1))=\mu\bar{x}+\lambda(\mu-1)-(\mu^2\bar{x}^2+o(\mu-1))(1+o(1))
\end{align*}
which in turn entails:
\begin{align*}
    \bar{x}^2=\frac{\lambda(\mu-1)}{\mu^2}(1+o(1))\;.
\end{align*}

\emph{5.} Here by $o(1)$ we mean a quantity that approaches $0$ as $h\to 0_+$.
As in the previous steps:
\begin{align*}
    \bar{x}=T(\bar{x};1,h)=\bar{x}+h-(\bar{x}^2+o(h))(1+o(1))\;,
\end{align*}then we get:
\begin{align*}
    \bar{x}^2=h(1+o(1))\quad\Rightarrow\quad\delta=2\;.
\end{align*}
\end{proof}

\printbibliography
\end{document}